\newcommand{\YES}{\mathsf{yes}}
\newcommand{\NO}{\mathsf{no}}
\newcommand{\anyds}{D}
\newcommand{\canods}{C}
\newcommand{\vercano}[1]{w_{#1}}
\newcommand{\parta}{V_1}
\newcommand{\partb}{V_2}
\newcommand{\partc}{V_3}
\newcommand{\calI}{\mathcal{I}}
\newcommand{\intervala}[1]{l(#1)}
\newcommand{\intervalb}[1]{r(#1)}
\newcommand{\sevstep}[1]{\overset{#1}{\leftrightsquigarrow}}
\newcommand{\cocompg}[1]{G_{#1}}
\newcommand{\cocompnum}{p}
\newcommand{\coga}{G_a}
\newcommand{\cogb}{G_b}
\newcommand{\cographa}{w_a}
\newcommand{\cographb}{w_b}
\newcommand{\cocanods}{C}
\newenvironment{listing}[1]{%
    \begin{list}{*}{%
    \settowidth{\labelwidth}{#1}%
    \setlength{\leftmargin}{\labelwidth}%
    \advance \leftmargin by 12pt
    \setlength{\itemsep}{0pt}%
    \setlength{\parsep}{0pt}%
    \setlength{\topsep}{0pt}%
    \setlength{\parskip}{0pt}%
    }%
    }{%
    \end{list}}
\begin{document}
	\title{The complexity of\\ dominating set reconfiguration}

\author{%
	Arash Haddadan\inst{1} \and
	Takehiro Ito\inst{2} \and
	Amer E. Mouawad\inst{1} \and\\
	Naomi Nishimura\inst{1} \and
	Hirotaka Ono\inst{3} \and
	Akira Suzuki\inst{2} \and
	Youcef Tebbal\inst{1}
}

\institute{%
	University of Waterloo,\\
	200 University Ave. West, Waterloo, Ontario N2L 3G1, Canada.\\
	\email{\{ahaddada, aabdomou, nishi, ytebbal\}@uwaterloo.ca}
\and
	Graduate School of Information Sciences, Tohoku University, \\
	Aoba-yama 6-6-05, Sendai, 980-8579, Japan.\\
	\email{\{takehiro, a.suzuki\}@ecei.tohoku.ac.jp}
\and
	Faculty of Economics, Kyushu University, \\
	Hakozaki 6-19-1, Higashi-ku, Fukuoka, 812-8581, Japan.\\
	\email{hirotaka@econ.kyushu-u.ac.jp}
}

\maketitle

\begin{abstract}
Suppose that we are given two dominating sets $D_s$ and $D_t$ of a graph $G$
whose cardinalities are at most a given threshold $k$.
Then, we are asked whether there exists a sequence of dominating sets of $G$ between
$D_s$ and $D_t$ such that each dominating set in the sequence is of cardinality
at most $k$ and can be obtained from the previous one by either adding or deleting exactly one vertex.
This problem is known to be PSPACE-complete in general.
In this paper, we study the complexity of this decision problem from the viewpoint of graph classes.
We first prove that the problem remains PSPACE-complete even for planar graphs, bounded bandwidth graphs, split graphs, and bipartite graphs.
We then give a general scheme to construct linear-time algorithms and show
that the problem can be solved in linear time for cographs, trees, and interval graphs.
Furthermore, for these tractable cases, we can obtain a desired sequence such
that the number of additions and deletions is bounded by $O(n)$, where $n$ is the number of vertices in the input graph.
\end{abstract}

\section{Introduction}

Consider the art gallery problem modeled on graphs:
Each vertex corresponds to a room which has a monitoring camera
and each edge represents the adjacency of two rooms.
Assume that each camera in a room can monitor the room itself and its adjacent rooms.
Then, we wish to find a subset of cameras that can monitor all rooms;
the corresponding vertex subset $D$ of the graph $G$ is called a {\em dominating set},
that is, every vertex in $G$ is either in $D$ or adjacent to a vertex in $D$.
For example, \figurename~\ref{fig:example} shows six different dominating sets of the same graph.
Given a graph $G$ and a positive integer $k$, the problem of determining
whether $G$ has a dominating set of cardinality at most
$k$ is a classical NP-complete problem~\cite{GJ79}.

\subsection{Our problem}

However, the art gallery problem could be considered in more ``dynamic'' situations:
In order to maintain the cameras, we sometimes need to change the current dominating set into another one.
This transformation needs to be done by switching the cameras individually and we certainly need to
keep monitoring all rooms, even during the transformation.

In this paper, we thus study the following problem:
Suppose that we are given two dominating sets of a graph $G$ whose cardinalities are at most
a given threshold $k > 0$ (e.g., the leftmost and rightmost ones in \figurename~\ref{fig:example}, where $k = 4$), and
we are asked whether we can transform one into the other via dominating sets of $G$ such that each
intermediate dominating set is of cardinality at most $k$ and can be obtained from the
previous one by either adding or deleting a single vertex.
We call this decision problem the {\sc dominating set reconfiguration (DSR)} problem.
For the particular instance of \figurename~\ref{fig:example}, the answer is $\YES$ as illustrated in \figurename~\ref{fig:example}.

\begin{figure}[t]
	\centering
	\includegraphics[width=0.9\linewidth]{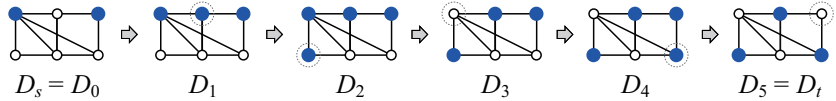}
	\vspace{-1em}
	\caption{A sequence $\langle D_0, D_1, \ldots, D_5 \rangle$ of dominating sets in the
    same graph, where $k = 4$ and the vertices in dominating sets are depicted by large (blue) circles.}
	\vspace{-1em}
	\label{fig:example}
\end{figure}
	
\subsection{Known and related results}

Recently, similar problems have been extensively studied under the
reconfiguration framework~\cite{IDHPSUU}, which arises when we
wish to find a step-by-step transformation between two feasible solutions of
a combinatorial problem such that all intermediate solutions are also feasible.
The reconfiguration framework has been applied to several well-studied problems, including
{\sc satisfiability}~\cite{Kolaitis},
{\sc independent set}~\cite{HearnDemaine2005,IDHPSUU,KaminskiMM12,MNRSS13,Wro14},
{\sc vertex cover}~\cite{IDHPSUU,INZ14,MNR14,MNRSS13},
{\sc clique}, {\sc matching}~\cite{IDHPSUU},
{\sc vertex-coloring}~\cite{BC09},
and so on. (See also a survey~\cite{van13}.)

Mouawad et al.~\cite{MNRSS13} proved that {\sc dominating set reconfiguration} is $W[2]$-hard
when parameterized by $k+\ell$, where $k$ is the cardinality threshold
of dominating sets and $\ell$ is the length of a sequence of dominating sets.
	
Haas and Seyffarth~\cite{HS14} gave sufficient conditions for the
cardinality threshold $k$ for which any two dominating sets can be transformed into one another.
They proved that the answer to {\sc dominating set reconfiguration} is $\YES$ for
a graph $G$ with $n$ vertices if $k = n - 1$ and $G$ has a matching of cardinality at least two;
they also gave a better sufficient condition when restricted to bipartite or chordal graphs.
Recently, Suzuki et al.~\cite{SMN14} improved the former condition and showed
that the answer is $\YES$ if $k = n - \mu$ and $G$ has a matching of
cardinality at least $\mu+1$, for any nonnegative integer $\mu$.

\subsection{Our contribution}
	
To the best of our knowledge, no algorithmic results are known
for the {\sc dominating set reconfiguration} problem and it is therefore desirable
to obtain a better understanding of what separates ``hard'' from ``easy'' instances.
To that end, we study the problem from the viewpoint of
graph classes and paint an interesting picture of the boundary
between intractability and polynomial-time solvability.
(See also \figurename~\ref{fig:results}.)

We first prove that the problem is PSPACE-complete even on
planar graphs, bounded bandwidth graphs, split graphs, and bipartite graphs.
Our reductions for PSPACE-hardness follow from the classical reductions
for proving the NP-hardness of {\sc dominating set}.
However, the reductions should be constructed carefully so that they preserve
not only the existence of dominating sets but also the reconfigurability.
	
We then give a general scheme to construct linear-time algorithms for the problem.
As examples of its application, we demonstrate that the problem can
be solved in linear time on cographs (also known as $P_4$-free graphs), trees, and interval graphs.
Furthermore, for these tractable cases, we can obtain a desired sequence such
that the number of additions and deletions (i.e., the length of a reconfiguration sequence) can
be bounded by $O(n)$, where $n$ is the number of vertices in the input graph.
	
Proofs of lemmas and theorems marked with a star can be found in the appendix.

\begin{figure}[t]
    \centering
	\includegraphics[width=\linewidth]{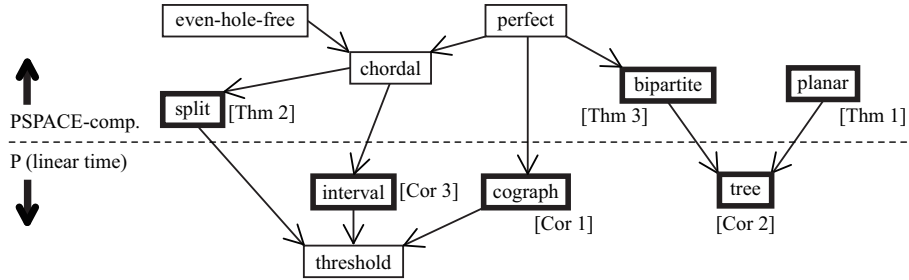}
	\vspace{-1em}
	\caption{Our results, where each arrow represents the inclusion relationship between graph classes:
			$A \to B$ represents that $B$ is properly included in $A$~\cite{BLS99}.
			We also show PSPACE-completeness on graphs of bounded bandwidth (Theorem~\ref{the:hardness1}).}
	\vspace{-1em}
    \label{fig:results}
\end{figure}

\section{Preliminaries}

	
\subsubsection{Graph notation and dominating set.}

We assume that each input graph $G$ is a simple undirected graph with vertex set $V(G)$
and edge set $E(G)$, where $|V(G)| = n$ and $|E(G)| = m$.
For a set $S \subseteq V(G)$ of vertices, the subgraph of $G$ {\em induced} by $S$ is
denoted by $G[S]$, where $G[S]$ has vertex set $S$ and edge set $\{uv \in E(G) \mid u, v \in S\}$.

For a vertex $v$ in a graph $G$, we let $N_G(v) = \{u \in V(G) \mid vu \in E(G)\}$ and $N_G[v] = N_G(v) \cup \{v\}$.
For a set $S \subseteq V(G)$ of vertices, we define $N_G[S] = \bigcup_{v \in S} N_G[v]$ and $N_G(S) = N_G[S] \setminus S$.
We sometimes drop the subscript $G$ if it is clear from the context.

For a graph $G$, a set $D \subseteq V(G)$ is a {\em dominating set} of $G$ if $N_G[D] = V(G)$.
Note that $V(G)$ always forms a dominating set of $G$.
For a vertex $u \in V(G)$ and a dominating set $D$ of $G$, we say that $u$ is {\em dominated} by $v \in D$ if $u \notin D$ and $u \in N_G(v)$.
A vertex $w$ in a dominating set $D$ is {\em deletable} if $D \setminus \{w\}$ is also a dominating set of $G$.
A dominating set $D$ of $G$ is {\em minimal} if there is no deletable vertex in $D$.

\subsubsection{Dominating set reconfiguration.}
	
We say that two dominating sets $D$ and $D^\prime$ of the same graph $G$
are {\em adjacent} if there exists a vertex $u \in V(G)$ such that
$D \vartriangle D^\prime = (D \setminus D^\prime) \cup (D^\prime \setminus D) = \{ u\}$,
i.e. $u$ is the only vertex in the {\em symmetric difference} of $D$ and $D^\prime$.
For two dominating sets $D_p$ and $D_q$ of $G$, a sequence $\langle D_0, D_1, \ldots, D_\ell \rangle$ of
dominating sets of $G$ is called a {\em reconfiguration sequence} between $D_p$ and $D_q$ if it has the following properties:
\begin{listing}{aaa}
\item[(a)] $D_0 = D_p$ and $D_\ell = D_q$; and
\item[(b)] $D_{i-1}$ and $D_i$ are adjacent for each $i \in \{1,2,\ldots, \ell\}$.
\end{listing}
Note that any reconfiguration sequence is {\em reversible}, that is, $\langle D_\ell, D_{\ell-1}, \ldots, D_0 \rangle$ is
also a reconfiguration sequence between $D_p$ and $D_q$.
We say a vertex $v \in V(G)$ is {\em touched} in a reconfiguration
sequence $\sigma = \langle D_0, D_1, \ldots, D_\ell \rangle$ if $v$ is either added or deleted at least once in $\sigma$.

For two dominating sets $D_p$ and $D_q$ of a graph $G$ and an integer $k>0$, we
write $D_p \sevstep{k} D_q$ if there exists a reconfiguration sequence $\langle D_0, D_1, \ldots, D_\ell \rangle$
between $D_p$ and $D_q$ in $G$ such that $|D_i| \le k$ holds for every $i \in \{0, 1, \ldots, \ell\}$, for some $\ell \geq 0$.
Note that $k \ge \max \{|D_p|, |D_q|\}$ clearly holds if $D_p \sevstep{k} D_q$.
Then, the {\sc dominating set reconfiguration (DSR)} problem is defined as follows:	
\begin{center}
	\parbox{0.85\hsize}{
	\begin{listing}{{\bf Question:}}
	\item[{\bf Input:}] A graph $G$, two dominating sets $D_s$ and $D_t$ of $G$, and an integer threshold $k \geq \max \{ |D_s|, |D_t| \}$
	\item[{\bf Question:}] Determine whether $D_s \sevstep{k} D_t$ or not.
	\end{listing}}
\end{center}
We denote by a $4$-tuple $(G, D_s, D_t, k)$ an instance of {\sc dominating set reconfiguration}.
Note that {\sc DSR} is a decision problem and hence it does not ask for an actual reconfiguration sequence.
We always denote by $D_s$ and $D_t$ the {\em source} and {\em target} dominating sets of $G$, respectively.

\section{PSPACE-completeness}\label{sec:hardness}

In this section, we prove that {\sc dominating set reconfiguration} remains PSPACE-complete even for restricted classes of graphs;
some of these classes show nice contrasts to our algorithmic results in Section~\ref{sec:algo}.
(See also \figurename~\ref{fig:results}.)

\begin{theorem}\label{the:hardness1}
{\sc DSR} is PSPACE-complete on planar graphs of maximum degree six and on graphs of bounded bandwidth.
\end{theorem}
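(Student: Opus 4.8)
The plan is to establish PSPACE-completeness in two parts: membership and hardness. Membership in PSPACE is routine, since a nondeterministic polynomial-space machine can guess a reconfiguration sequence step by step, storing only the current dominating set (of size at most $k$) and verifying at each step that it remains a dominating set and differs from its predecessor by a single vertex; by Savitch's theorem this suffices. The substance of the theorem is the hardness direction, and I would obtain it by a polynomial-time reduction from a known PSPACE-complete reconfiguration problem rather than from a static NP-complete problem. The natural source is \ISR, which is known to be PSPACE-complete even on planar graphs of bounded degree and on graphs of bounded bandwidth (via the reductions from the hardness of \VCR and \ISR in the cited literature).

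\textbf{Reduction strategy.} The classical NP-hardness reduction from \VCFULL{} (equivalently \ISFULL) to \DSFULL{} replaces each edge $uv$ of the input graph by a gadget—typically subdividing the edge or attaching a triangle/pendant structure—so that dominating the gadget forces a vertex cover to be selected. I would adapt this gadget so that it transfers not merely the \emph{existence} of a small solution but the entire \emph{reconfiguration} structure. Concretely, starting from an instance $(H, I_s, I_t, k')$ of \ISR{} where $H$ is planar of bounded degree (respectively, of bounded bandwidth), I would build a graph $G$ by attaching a local gadget to each vertex and/or edge of $H$ so that: (i) every dominating set of $G$ of size at most $k$ corresponds to an independent set of $H$ together with a fixed ``forced'' set of gadget vertices that are never touched; and (ii) adjacency of dominating sets in $G$ corresponds exactly to adjacency of independent sets in $H$. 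The key design requirement is that the forced gadget vertices be \emph{undeletable} under the threshold $k$—each must be the unique dominator of some private vertex—so that any valid move in $G$ is a move on the $H$-part, and conversely every token slide in $H$ lifts to a single add/delete in $G$. I would set $k$ to be $k'$ plus the number of forced gadget vertices, and define $D_s, D_t$ as the images of $I_s, I_t$.

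\textbf{Preserving the graph-class invariants.} Because each gadget is attached locally to a single vertex or edge, planarity is preserved when $H$ is planar, and the maximum degree stays bounded (the claimed bound of six reflects the worst-case degree at a gadget junction and I would verify it by direct inspection of the gadget). For the bounded-bandwidth result, I would note that a local, bounded-size gadget applied to each vertex/edge changes the bandwidth only by a constant multiplicative or additive factor, so a bandwidth-$b$ instance of \ISR{} yields an $O(b)$-bandwidth instance of \DSR. I would argue these two properties separately, since no single source graph is simultaneously planar \emph{and} of bounded bandwidth.

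\textbf{The main obstacle.} The hard part will be proving the exact correspondence in both directions while respecting the cardinality threshold $k$—in particular, ruling out ``cheating'' reconfiguration sequences in $G$ that temporarily touch forced gadget vertices or that dominate an $H$-vertex through gadget vertices rather than through a genuine independent-set token. This is precisely the subtlety flagged in the introduction: the reduction must preserve \emph{reconfigurability}, not just solvability. I would handle it by making the domination of each $H$-vertex depend on a configuration that is in bijection with legal independent sets, and by ensuring the threshold $k$ is tight enough that no forced vertex can be removed even transiently (its removal would leave a private vertex undominated) while still allowing the intended single-vertex slides. Verifying that this tightness holds simultaneously with the existence of the intended moves—and that it is robust under both the planar and the bounded-bandwidth gadget placements—is where the careful case analysis lies.
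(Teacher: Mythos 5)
Your overall strategy is the right one and is the same in spirit as the paper's: reduce from a PSPACE-complete reconfiguration problem using the classical \VCFULL-to-\DSFULL{} gadget, and show that any reconfiguration sequence in the constructed graph can be repaired so that it never touches gadget vertices. However, the proposal stops exactly where the proof begins: you never specify the gadget, never define the correspondence precisely, and explicitly defer the repair argument (``ruling out cheating sequences'') to ``careful case analysis'' that you do not carry out. That repair argument \emph{is} the content of the theorem. In the paper it is a single observation: for each edge $uw$ of $G^\prime$ one adds a vertex $v_{uw}$ adjacent to $u$ and $w$, sets $D_s = C_s$, $D_t = C_t$ with the same threshold $k$, and notes that $N_G[v_{uw}] \subseteq N_G[u]$ and $N_G[v_{uw}] \subseteq N_G[w]$; hence any addition of $v_{uw}$ in a sequence can be replaced by an addition of $u$ or $w$, yielding a sequence confined to $V(G^\prime)$, which is then a vertex-cover reconfiguration sequence. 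No forced vertices, private neighbors, or tightness-of-$k$ analysis are needed.

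There is also a concrete flaw in the architecture you do sketch. You propose that dominating sets of $G$ of size at most $k$ correspond to \emph{independent sets} of $H$ plus a fixed set of forced gadget vertices. Since the \DSR{} threshold is an \emph{upper} bound on $|D|$, this correspondence translates into an upper bound on the size of the independent set; but \ISR{} is PSPACE-complete only under a \emph{lower} bound on the independent set size (under an upper bound the instance is trivially a $\YES$-instance, since one may delete all tokens and rebuild). So a reduction with the correspondence you describe would not transfer hardness. The fix is to make the tokens represent the vertex cover (the complement of the independent set), i.e., to reduce from \VCR{} --- a minimization problem whose threshold points the same way as \DSR's --- which is exactly what the paper does, and which also makes the forward direction immediate because every vertex cover of $G^\prime$ is already a dominating set of $G$. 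Your remarks on membership in PSPACE and on preservation of planarity, maximum degree, and bandwidth are fine, but the degree bound of six cannot be ``verified by direct inspection of the gadget'' until a gadget is actually given (in the paper it arises because each vertex of a degree-three planar \VCR{} instance gains one new neighbor per incident edge).
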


\begin{proof}
One can observe that the problem is in PSPACE~\cite[Theorem~1]{IDHPSUU}.
We thus show that it is PSPACE-hard for those graph classes by a polynomial-time
reduction from {\sc vertex cover reconfiguration}~\cite{IDHPSUU,INZ14,MNR14}.
In {\sc vertex cover reconfiguration}, we are given two vertex covers $C_s$ and $C_t$ of a graph
$G^\prime$ such that $|C_s| \le k$ and $|C_t| \le k$, for some integer $k$, and asked
whether there exists a reconfiguration sequence of vertex covers
$C_0, C_1, \ldots, C_\ell$ of $G$ such that $C_0 = C_s$, $C_\ell = C_t$, $|C_i| \le k$,
and $|C_{i-1} \vartriangle C_i| = 1$ for each $i \in \{1, 2, \ldots, \ell\}$.
	
Our reduction follows from the classical reduction from {\sc vertex cover} to {\sc dominating set}~\cite{GJ79}.
Specifically, for every edge $uw$ in $E(G^\prime)$, we add a new vertex $v_{uw}$
and join it with each of $u$ and $w$ by two new edges $u v_{uw}$ and $v_{uw}w$;
let $G$ be the resulting graph.
Then, let $(G, D_s = C_s, D_t = C_t, k)$ be the corresponding instance of {\sc dominating set reconfiguration}.
Clearly, this instance can be constructed in polynomial time.

We now prove that $D_s \sevstep{k} D_t$ holds if and only if there
is a reconfiguration sequence of vertex covers in $G^\prime$ between $C_s$ and $C_t$.
However, the if direction is trivial, because any vertex cover of
$G^\prime$ forms a dominating set of $G$ and both problems employ the same
reconfiguration rule (i.e., the symmetric difference is of size one).
Therefore, suppose that $D_s \sevstep{k} D_t$ holds, and hence there exists
a reconfiguration sequence of dominating sets in $G$ between $D_s$ and $D_t$.
Recall that neither $D_s$ nor $D_t$ contain a newly added vertex in $V(G) \setminus V(G^\prime)$.
Thus, if a vertex $v_{uw}$ in $V(G) \setminus V(G^\prime)$ is touched, then $v_{uw}$ must be added first.
By the construction of $G$, both $N_G[v_{uw}] \subseteq N_G[u]$ and $N_G[v_{uw}] \subseteq N_G[w]$ hold.
Therefore, we can replace the addition of $v_{uw}$ by that of either
$u$ or $w$ and obtain a (possibly shorter) reconfiguration sequence
of dominating sets in $G$ between $D_s$ and $D_t$ which touches vertices only in $G^\prime$.
Then, it is a reconfiguration sequence of vertex covers in $G^\prime$
between $C_s$ and $C_t$, as needed.
	
{\sc Vertex cover reconfiguration} is known to be PSPACE-complete
on planar graphs of maximum degree three~\cite{INZ14,MNR14} and on graphs of bounded bandwidth~\cite{Wro14}.
Thus, the reduction above implies PSPACE-hardness on planar
graphs of maximum degree six and on graphs of bounded bandwidth;
note that, since the number of edges in $G$ is only the
triple of that in $G^\prime$, the bandwidth increases only by a constant multiplicative factor.
\qed
\end{proof}

We note that both pathwidth and treewidth
of a graph $G$ are bounded by the bandwidth of $G$.
Thus, Theorem~\ref{the:hardness1} yields that
{\sc dominating set reconfiguration} is PSPACE-complete on
graphs of bounded pathwidth and treewidth.

Adapting known techniques from NP-hardness proofs for the {\sc dominating set} problem~\cite{Ber84}, 
we also show PSPACE-completeness of {\sc dominating set reconfiguration}
on split graphs and on bipartite graphs; a graph is {\em split} if its vertex set can be
partitioned into a clique and an independent set~\cite{BLS99}.

\begin{theorem}[*]\label{the:split}
{\sc DSR} is PSPACE-complete on split graphs.
\end{theorem}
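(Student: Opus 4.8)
The plan is to follow the scheme of Theorem~\ref{the:hardness1}: membership in PSPACE is inherited from the general bound of~\cite[Theorem~1]{IDHPSUU}, so it remains to prove PSPACE-hardness by a polynomial-time reduction from \VCR, which is PSPACE-complete. Starting from a \VCR{} instance $(G^\prime, C_s, C_t, k)$, I would build a split graph $G$ exactly as in the classical NP-hardness reduction of~\cite{Ber84}: I keep $V(G^\prime)$ but turn it into a clique, and for every edge $uw \in E(G^\prime)$ I add one new vertex $v_{uw}$ adjacent to both $u$ and $w$. Then $V(G^\prime)$ is a clique and $\{v_{uw} \mid uw \in E(G^\prime)\}$ is an independent set, so $G$ is split; the construction is clearly polynomial, and I may assume $E(G^\prime) \neq \emptyset$. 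Finally I set $D_s = C_s$, $D_t = C_t$ and keep the same threshold $k$.

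The correctness hinges on two observations. First, a set $A \subseteq V(G^\prime)$ is a dominating set of $G$ if and only if $A$ is a (necessarily nonempty) vertex cover of $G^\prime$: being contained in the clique, a nonempty $A$ dominates all of $V(G^\prime)$, and it dominates every $v_{uw}$ exactly when it contains an endpoint of each edge $uw$. Second, for each independent vertex $v_{uw}$ we have $N_G[v_{uw}] \subseteq N_G[u]$ and $N_G[v_{uw}] \subseteq N_G[w]$, the same domination-subsumption property used in Theorem~\ref{the:hardness1}. The \emph{if} direction is then immediate: every vertex cover in a \VCR{} sequence is a dominating set of $G$ of the same cardinality, and since both problems use the identical reconfiguration rule, such a sequence is already a valid \DSR{} sequence, whence $D_s \sevstep{k} D_t$.

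For the \emph{only-if} direction, I would start from a reconfiguration sequence $\langle D_0, \dots, D_\ell\rangle$ witnessing $D_s \sevstep{k} D_t$ in $G$ and transform it into one that touches only clique vertices. Since $D_s$ and $D_t$ contain no independent vertex, each $v_{uw}$ that is touched is added before it is removed; processing the maximal time-intervals during which $v_{uw}$ is present, I would replace its presence by that of a fixed endpoint $u$, exploiting $N_G[v_{uw}] \subseteq N_G[u]$ to preserve domination at every step. Because this only ever swaps $v_{uw}$ for $u$ (and deletes steps that become redundant when $u$ is already present), no intermediate set grows, so the cardinality bound $|D_i| \le k$ is maintained and adjacency is kept as a single-vertex change. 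Eliminating all independent vertices yields a reconfiguration sequence of dominating sets contained in the clique; by the first observation each of these sets is a vertex cover of $G^\prime$, so the result is a \VCR{} sequence between $C_s$ and $C_t$, as required.

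The main obstacle is precisely this last bookkeeping: unlike Theorem~\ref{the:hardness1}, where the removed vertices were subdivision vertices of low degree, here the entire independent set can simultaneously dominate the clique, so the substitution argument must be carried out along maximal presence-intervals and must handle the coupling that arises when the chosen endpoint $u$ is itself added or deleted while $v_{uw}$ is present. The fact keeping this manageable is that holding $u$ present throughout such an interval can only help domination and never raises the size beyond that of the original set once $v_{uw}$ is dropped, so both validity and the $\le k$ constraint are preserved throughout.
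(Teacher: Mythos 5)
Your proposal is correct and follows essentially the same route as the paper: the identical split-graph construction from~\cite{Ber84} (clique on $V(G^\prime)$ plus one independent edge-vertex per edge), the same observation that a subset of the clique is a dominating set of $G$ iff it is a vertex cover of $G^\prime$, and the same domination-subsumption replacement $N_G[v_{uw}] \subseteq N_G[u]$ to eliminate touched independent vertices, exactly as in the paper's proof (which itself just re-invokes the argument of Theorem~\ref{the:hardness1} for the only-if direction). Your extra bookkeeping over maximal presence-intervals is a more explicit rendering of the same substitution step and introduces no new ideas or gaps.
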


\begin{theorem}[*]\label{the:bipartite}
{\sc DSR} is PSPACE-complete on bipartite graphs.
\end{theorem}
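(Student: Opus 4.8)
The plan is to reduce from \VCRFULL\ on bipartite graphs, mirroring the strategy of Theorem~\ref{the:hardness1} but with a gadget construction that keeps the resulting graph bipartite. The difficulty is that the standard edge-subdivision gadget $v_{uw}$ used in Theorem~\ref{the:hardness1} does not by itself destroy bipartiteness (subdividing edges preserves bipartiteness), but the real obstacle for \DS\ on bipartite graphs is that we must also force the reconfiguration to stay inside the ``original'' vertex set and not exploit the freedom that bipartiteness otherwise grants. So I would adapt the classical NP-hardness reduction of Bertossi~\cite{Ber84} from \VCFULL\ to \DSFULL\ on bipartite graphs, which takes a graph $G^\prime$ and builds a bipartite graph in which dominating sets correspond to vertex covers of $G^\prime$.

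\noindent First I would recall Bertossi's construction: given $G^\prime$ with the \VC\ instance, create a bipartite graph $G$ by introducing, for each edge $uw \in E(G^\prime)$, a degree-two vertex adjacent to $u$ and $w$, together with a small number of pendant/forcing vertices whose purpose is to guarantee that every vertex of the original part must either be in the dominating set or be dominated by a neighbour that plays the role of a ``covering'' vertex. The key structural facts I would establish are: (i) $G$ is bipartite by construction; (ii) every vertex cover $C$ of $G^\prime$ yields a dominating set $D = C$ (possibly together with a fixed set of forced vertices) of $G$ of the same effective size; and (iii) conversely, any dominating set of $G$ can be transformed, without increasing its cardinality and respecting the single-vertex reconfiguration rule, into one that touches only the ``vertex-cover'' vertices, using the containment property $N_G[v_{uw}] \subseteq N_G[u]$ (and its analogue for the forcing vertices) exactly as in the proof of Theorem~\ref{the:hardness1}.

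\noindent Given these facts, the reduction proceeds as in Theorem~\ref{the:hardness1}: set $D_s = C_s$, $D_t = C_t$ (adjusted by the forced vertices), choose the threshold $k$ appropriately, and argue the biconditional $D_s \sevstep{k} D_t$ if and only if $C_s$ and $C_t$ are reconfigurable as vertex covers in $G^\prime$. The \emph{if} direction is again immediate since a vertex cover of $G^\prime$ is a dominating set of $G$ and both problems share the same reconfiguration rule. For the \emph{only if} direction I would use the replacement argument: whenever an auxiliary (non-original) vertex is added, its closed neighbourhood is contained in that of an original vertex, so the addition can be rerouted to an original vertex, yielding a reconfiguration sequence that stays within $V(G^\prime)$ and hence corresponds to a vertex cover reconfiguration sequence.

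\noindent \textbf{The hard part} will be verifying the cardinality bookkeeping in the presence of the forcing gadgets: unlike the clean subdivision of Theorem~\ref{the:hardness1}, Bertossi-style gadgets introduce vertices that \emph{must} lie in every dominating set, so the threshold $k$ must be shifted by exactly the number of such forced vertices, and I must confirm that these forced vertices never need to be touched during reconfiguration (otherwise the rerouting argument could fail or the length/cardinality constraints could be violated). Establishing that the forced vertices can be held fixed throughout, and that the remaining budget exactly tracks the \VC\ threshold on $G^\prime$, is the delicate step; the bipartiteness itself follows for free from the construction. Since \VCRFULL\ is PSPACE-complete even on restricted graphs~\cite{IDHPSUU,INZ14,MNR14}, and the reduction is polynomial-time, PSPACE-hardness on bipartite graphs follows, while membership in PSPACE is inherited from~\cite[Theorem~1]{IDHPSUU}.
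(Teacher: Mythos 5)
Your approach is essentially the paper's, and the outline is sound: the paper also ends up with Bertossi's bipartite construction, a forced vertex whose presence shifts the threshold by exactly one, and the same neighbourhood-containment rerouting argument. The only structural difference is that the paper factors the reduction through Theorem~\ref{the:split}: it reduces \DSR\ on split graphs (with $V(G^\prime)=A\cup B$, $A$ a clique, and $D_s^\prime, D_t^\prime\subseteq A$) to \DSR\ on bipartite graphs by deleting all edges inside $A$, adding a pendant edge $xy$, and joining $y$ to every vertex of $A$, setting $D_s=D_s^\prime\cup\{y\}$, $D_t=D_t^\prime\cup\{y\}$, $k=k^\prime+1$; composing this with the split-graph reduction gives precisely your direct reduction from \VCR. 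Two remarks. First, the ``hard part'' you flag is resolved cleanly in the paper: the only forcing gadget is the single pendant pair $\{x,y\}$, every dominating set of $G$ must contain $x$ or $y$, and since $N_G[x]\subset N_G[y]$ and $y\in D_s\cap D_t$ one may normalize the sequence so that $y$ is present throughout; the bookkeeping is then just $k=k^\prime+1$. You should carry out this normalization explicitly rather than leave it as a plan, but it is exactly the containment argument you already invoke for the edge-vertices. Second, a small correction to your framing: the gadget of Theorem~\ref{the:hardness1} is not a subdivision --- the edge $uw$ is retained and $v_{uw}$ is added adjacent to both endpoints, creating a triangle --- so that construction genuinely destroys bipartiteness, which is why the original edges must be removed (and the vertex $y$ introduced to re-dominate $V(G^\prime)$) in the bipartite case. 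Also, phrase the source problem carefully: you should reduce from \VCR\ on general (e.g., planar, maximum degree three) graphs, since \VCR\ restricted to bipartite inputs is not among the known PSPACE-complete cases cited by the paper; your gadget makes the output bipartite regardless of the input, which is all that is needed.
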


\section{General scheme for linear-time algorithms} \label{sec:algo}

In this section, we show that {\sc dominating set reconfiguration} is
solvable in linear time on cographs, trees, and interval graphs.
Interestingly, these results can be obtained by the application
of the same strategy; we first describe the general scheme in Section~\ref{dsr:genestra}.
We then show in Sections~\ref{dsr:cograph}--\ref{dsr:alinterval} that the
problem can be solved in linear time on those graph classes.

\subsection{General scheme}\label{dsr:genestra}
The general idea is to introduce the concept of a ``canonical'' dominating set for a graph $G$.
We say that a minimum dominating set $\canods$ of $G$ is {\em canonical}
if $\anyds \sevstep{k} \canods$ holds for every dominating set $\anyds$ of $G$ and $k = |\anyds|+1$.
Then, we have the following theorem.

\begin{theorem} \label{the:canonical}
If a graph $G$ has a canonical dominating set, then {\sc dominating set reconfiguration} can be solved in linear time on $G$ .
\end{theorem}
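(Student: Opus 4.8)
The plan is to give an explicit linear-time decision procedure whose correctness rests on two structural facts about the budget-$k$ reconfiguration graph, and which, pleasingly, never needs to compute the canonical set $\canods$ itself: only its existence (the hypothesis) is used. Throughout I would write $k^\ast = \max\{|D_s|, |D_t|\}$, so that $k \ge k^\ast$ by definition of the instance.

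First I would prove a ``reachability to the hub'' lemma: every dominating set $\anyds$ of $G$ with either $|\anyds| < k$, or with $|\anyds| = k$ and $\anyds$ not minimal, satisfies $\anyds \sevstep{k} \canods$. If $|\anyds| < k$, this is immediate from the canonical property, since $\anyds \sevstep{|\anyds|+1} \canods$ and $|\anyds|+1 \le k$. If $|\anyds| = k$ and $\anyds$ has a deletable vertex $w$, then deleting $w$ (a move within budget, to size $k-1$) yields a dominating set $\anyds'$ of size $< k$, whence $\anyds' \sevstep{k} \canods$ by the previous case, and so $\anyds \sevstep{k} \canods$. Next I would prove the complementary ``isolation'' lemma: if $\anyds$ is a minimal dominating set with $|\anyds| = k$, then the only candidate moves to an adjacent dominating set are additions (reaching size $k+1 > k$, hence out of budget) or deletions (which by minimality do not preserve domination). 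Thus $\anyds$ is an isolated vertex of the budget-$k$ reconfiguration graph, and $\anyds \sevstep{k} \anyds'$ holds if and only if $\anyds = \anyds'$.

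Combining the two lemmas yields the characterization I would use as the algorithm: $D_s \sevstep{k} D_t$ holds if and only if $D_s = D_t$, or each of $D_s$ and $D_t$ reaches the hub (i.e., has size $< k$, or has size exactly $k$ but is non-minimal). Indeed, if both reach $\canods$ then they reach each other by reversibility and concatenation; and if one of them is a minimal dominating set of size exactly $k$, the isolation lemma forces the answer to be $\YES$ only when $D_s = D_t$. In particular, when $k > k^\ast$ both sets have size $< k$, so the answer is always $\YES$; the only nontrivial work occurs in the tight case $k = k^\ast$.

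Finally I would verify linear running time. Computing $k^\ast$ and testing $D_s = D_t$ take $O(n)$ time; when $k = k^\ast$ one must additionally decide, for whichever of $D_s, D_t$ has size exactly $k$, whether it is minimal, and answer $\YES$ exactly when all such sets are non-minimal (or $D_s = D_t$). This test runs in $O(n+m)$ time: compute for every vertex $u$ the count $|N_G[u] \cap \anyds|$ in one pass, and observe that a vertex $w \in \anyds$ is deletable precisely when every $u \in N_G[w]$ has count at least two, so a single sweep detects a deletable vertex. The hard part will be the isolation lemma and the careful treatment of $k = k^\ast$: the ``if'' direction of the characterization is a routine concatenation through $\canods$, whereas the ``only if'' direction---certifying a genuine $\NO$ answer when a size-$k$ minimal set is involved and $D_s \neq D_t$---depends entirely on recognizing that such a set admits no in-budget move whatsoever.
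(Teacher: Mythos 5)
Your proof is correct and follows essentially the same route as the paper: your ``reachability to the hub'' and ``isolation'' lemmas are precisely the paper's Lemmas~\ref{lem:plusone} and~\ref{lem:minimal}, and the resulting linear-time test (check sizes, then check minimality of whichever of $D_s,D_t$ has cardinality exactly $k$) is the paper's algorithm. One small point in your favor: your explicit ``$D_s = D_t$'' clause handles the degenerate case where both inputs are equal minimal dominating sets of size $k$ (reachable via the length-zero sequence), which the statement of the paper's Lemma~\ref{lem:pluszero} technically overlooks.
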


We note that proving the existence of a canonical dominating set is sufficient for solving the decision problem.
Therefore, we do not need to find an actual canonical dominating set in linear time.
In Sections~\ref{dsr:cograph}--\ref{dsr:alinterval}, we will show that cographs, trees, and interval graphs admit
canonical dominating sets, and hence the problem can be solved in linear time on those graph classes.
Note that, however, Theorem~\ref{the:canonical} can be applied to any graph which has a canonical dominating set.
In the remainder of this subsection, we prove Theorem~\ref{the:canonical} starting 
with the following lemma.	

\begin{lemma} \label{lem:plusone}
Suppose that a graph $G$ has a canonical dominating set.
Then, an instance $(G, D_s, D_t, k)$ of {\sc dominating set reconfiguration} is a $\YES$-instance if $k \ge \max \{|D_s|, |D_t|\}+1$.
\end{lemma}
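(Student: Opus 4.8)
The plan is to use the canonical dominating set $\canods$ as a common hub through which both $D_s$ and $D_t$ are routed. First I would record two elementary properties of the relation $\sevstep{k}$ that follow directly from its definition. \emph{Monotonicity in the budget}: if $D_p \sevstep{k'} D_q$ and $k \ge k'$, then $D_p \sevstep{k} D_q$, since any reconfiguration sequence all of whose members have cardinality at most $k'$ certainly has all members of cardinality at most $k$. \emph{Transitivity via concatenation}: if $D_p \sevstep{k} D_q$ and $D_q \sevstep{k} D_r$, then $D_p \sevstep{k} D_r$, obtained by gluing the two sequences at their common endpoint $D_q$; the concatenation is again a valid reconfiguration sequence bounded by $k$, and reversibility (noted in the preliminaries) lets me reverse either factor as needed.

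Next I would invoke the definition of canonical. Since $\canods$ is canonical and both $D_s$ and $D_t$ are dominating sets of $G$, applying the defining property once to $D_s$ and once to $D_t$ gives $D_s \sevstep{|D_s|+1} \canods$ and $D_t \sevstep{|D_t|+1} \canods$. Because the hypothesis guarantees $k \ge \max\{|D_s|,|D_t|\}+1$, we have both $k \ge |D_s|+1$ and $k \ge |D_t|+1$, so monotonicity upgrades these to $D_s \sevstep{k} \canods$ and $D_t \sevstep{k} \canods$.

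Finally I would assemble the witnessing transformation. Reversing the second relation yields $\canods \sevstep{k} D_t$, and then transitivity gives $D_s \sevstep{k} \canods \sevstep{k} D_t$, i.e. $D_s \sevstep{k} D_t$. Hence $(G, D_s, D_t, k)$ is a $\YES$-instance, as claimed.

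As for the main obstacle: at the level of this lemma there is essentially none — the argument is purely a routing argument through the canonical hub, and the only facts it needs (monotonicity and concatenation of $\sevstep{k}$, together with reversibility of sequences) are immediate from the definitions. The real difficulty is hidden in the hypothesis, namely in exhibiting a canonical dominating set for a given graph; that is precisely the work deferred to Sections~\ref{dsr:cograph}--\ref{dsr:alinterval} for cographs, trees, and interval graphs.
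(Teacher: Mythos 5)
Your proposal is correct and follows essentially the same route as the paper: route both $D_s$ and $D_t$ to the canonical dominating set $\canods$ using the defining property with budgets $|D_s|+1$ and $|D_t|+1$, upgrade to budget $k$ by monotonicity, then reverse and concatenate. The only difference is that you spell out monotonicity and transitivity of $\sevstep{k}$ explicitly, which the paper treats as immediate.
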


\begin{proof}
Let $\canods$ be a canonical dominating set of $G$.
Then, $D_s \sevstep{k^\prime} \canods$ holds for $k^\prime = |D_s|+1$.
Suppose that $k \ge \max \{|D_s|, |D_t|\}+1$.
Since $k \ge |D_s| + 1 = k^\prime$, we clearly have $D_s \sevstep{k} \canods$.
Similarly, we have $D_t \sevstep{k} \canods$.
Since any reconfiguration sequence is reversible, we have $D_s \sevstep{k} \canods \sevstep{k} D_t$, as needed.
\qed
\end{proof}
	
Lemma~\ref{lem:plusone} implies that if a graph $G$ has a canonical
dominating set $\canods$, then it suffices to consider the case where $k = \max \{ |D_s|, |D_t|\}$.
Note that there exist $\NO$-instances of {\sc dominating set reconfiguration} in
such a case but we show that they can be easily identified in linear time, as implied by the following lemma.

\begin{lemma} \label{lem:pluszero}
Let $(G, D_s, D_t, k)$ be an instance of {\sc dominating set reconfiguration},
where $G$ is a graph admitting a canonical dominating set and $k = \max \{|D_s|, |D_t|\}$.
Then, $(G, D_s, D_t, k)$ is a $\YES$-instance if and only if
$D_i$ is not minimal for every $i \in \{s,t\}$ such that $|D_i| = k$.
\end{lemma}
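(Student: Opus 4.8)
The plan is to establish the two directions of the equivalence separately, after disposing of the degenerate case $D_s = D_t$, which is trivially a $\YES$-instance (the length-zero sequence suffices); in what follows I therefore assume $D_s \neq D_t$. Throughout I fix a canonical dominating set $\canods$ of $G$, so that $\anyds \sevstep{|\anyds|+1} \canods$ for every dominating set $\anyds$ of $G$, and I record that $|\canods| = \gamma(G) \le \min\{|D_s|,|D_t|\} \le k$, so that $\canods$ itself respects the threshold $k$.

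For the ``only if'' direction I argue the contrapositive: if some $D_i$ with $i \in \{s,t\}$ and $|D_i| = k$ is minimal, then the instance is a $\NO$-instance. The key observation is that such a $D_i$ is \emph{isolated} among the dominating sets of size at most $k$ under the reconfiguration rule. Indeed, any dominating set $D'$ adjacent to $D_i$ satisfies $|D_i \vartriangle D'| = 1$, so $D'$ arises from $D_i$ either by adding one vertex, which forces $|D'| = k+1 > k$ and is forbidden, or by deleting one vertex $w$, which cannot yield a dominating set since $D_i$ is minimal and hence has no deletable vertex. Thus no valid move is available from $D_i$, so the only reconfiguration sequence starting at $D_i$ is $\langle D_i \rangle$; as $D_s \neq D_t$, it cannot reach the other endpoint, and the instance is a $\NO$-instance.

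For the ``if'' direction, suppose that for each $i \in \{s,t\}$ with $|D_i| = k$ the set $D_i$ is not minimal. I will show $D_s \sevstep{k} \canods$ and $D_t \sevstep{k} \canods$; reversibility then yields $D_s \sevstep{k} \canods \sevstep{k} D_t$, as needed. Fix $i \in \{s,t\}$ and split into two cases. If $|D_i| < k$, then $|D_i|+1 \le k$, so the canonical property gives $D_i \sevstep{|D_i|+1} \canods$, and a fortiori $D_i \sevstep{k} \canods$. If $|D_i| = k$, then by hypothesis $D_i$ is not minimal, so it has a deletable vertex $w$; deleting $w$ is a valid step producing a dominating set $D_i^{-} = D_i \setminus \{w\}$ with $|D_i^{-}| = k-1$. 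The canonical property applied to $D_i^{-}$ yields $D_i^{-} \sevstep{|D_i^{-}|+1} \canods$, and since $|D_i^{-}|+1 = k$ this is $D_i^{-} \sevstep{k} \canods$; prepending the single deletion $D_i \to D_i^{-}$, which keeps every set of size at most $k$, gives $D_i \sevstep{k} \canods$.

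The main obstacle is the backward direction, and specifically that the budget is exactly $k = \max\{|D_s|,|D_t|\}$ rather than $k+1$, so Lemma~\ref{lem:plusone} does not apply directly. The non-minimality hypothesis is precisely what lets us first \emph{drop below} the threshold by deleting a deletable vertex, after which the canonical-set reachability—which may transiently use one extra vertex, i.e.\ size $|D_i^{-}|+1 = k$—still fits within the budget. The only other point requiring care is the degenerate case $D_s = D_t$, which (being a trivial $\YES$-instance that can fail the stated condition when $D_s$ is minimal of size $k$) is excluded at the outset.
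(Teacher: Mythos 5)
Your proof is correct and follows essentially the same route as the paper: the paper derives this lemma from its Lemma~\ref{lem:minimal}, whose necessity part is exactly your ``delete a deletable vertex to drop to size $k-1$, then invoke the canonical property'' step, and whose sufficiency part is exactly your isolation argument for a minimal $D_i$ of size $k$. You are in fact slightly more careful than the paper in explicitly handling the degenerate case $D_s = D_t$ (where the stated equivalence can fail) and the case $|D_i| < k$, both of which the paper leaves implicit.
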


Lemma~\ref{lem:pluszero} can be immediately obtained from the following lemma.

\begin{lemma} \label{lem:minimal}
Suppose that a graph $G$ has a canonical dominating set $\canods$.
Let $\anyds$ be an arbitrary dominating set of $G$ and let $k=|\anyds|$.
Then, $\anyds \sevstep{k} \canods$ holds if and only if $\anyds$ is not a minimal dominating set.
\end{lemma}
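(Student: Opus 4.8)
The plan is to prove the two directions separately: the ``if'' direction is a direct application of the defining property of the canonical dominating set $\canods$, while the ``only if'' direction exploits the absence of slack when the budget $k$ equals $|\anyds|$. For the ``if'' direction, suppose that $\anyds$ is not minimal. Then $\anyds$ contains a deletable vertex $w$, so $\anyds' = \anyds \setminus \{w\}$ is again a dominating set of $G$ with $|\anyds'| = k - 1$. By the definition of a canonical dominating set, $\anyds' \sevstep{k'} \canods$ holds for $k' = |\anyds'| + 1 = k$. Since $\anyds$ and $\anyds'$ are adjacent and both have cardinality at most $k$, I would prepend this single deletion to a reconfiguration sequence witnessing $\anyds' \sevstep{k} \canods$, obtaining $\anyds \sevstep{k} \anyds' \sevstep{k} \canods$ and hence $\anyds \sevstep{k} \canods$. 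The only thing to verify is that the budget is never exceeded, which holds because deleting one vertex first creates exactly the single unit of slack that the canonical property is stated for (budget $|\anyds'| + 1$), and $|\canods| \le |\anyds| = k$ as $\canods$ is a minimum dominating set.

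For the ``only if'' direction, I would argue contrapositively and assume that $\anyds$ is minimal. Any dominating set adjacent to $\anyds$ differs from it in a single vertex: a set obtained by an addition has cardinality $k + 1$ and violates the budget, whereas a set obtained by a deletion fails to be a dominating set, since a minimal $\anyds$ has no deletable vertex. Hence no dominating set adjacent to $\anyds$ has cardinality at most $k$, so $\anyds$ is ``frozen'' and the only dominating set reachable from it within the budget $k$ is $\anyds$ itself. Consequently $\anyds \sevstep{k} \canods$ can hold only if $\canods = \anyds$, so minimality of $\anyds$ rules out $\anyds \sevstep{k} \canods$ whenever $\anyds \neq \canods$, which is the contrapositive I want.

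The step I expect to require the most care is this boundary case $\anyds = \canods$: there $\anyds$ is a minimum, and hence minimal, dominating set that nevertheless trivially satisfies $\anyds \sevstep{k} \canods$, so the equivalence should be read modulo this degenerate coincidence. The substantive content is the frozen property itself---that a minimal dominating set of cardinality $k$ admits no budget-$k$ move at all---since it is exactly what makes the size-$k$ endpoints decisive in Lemma~\ref{lem:pluszero}, with Lemma~\ref{lem:plusone} already handling every instance that has genuine slack.
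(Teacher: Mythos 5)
Your proof is correct and takes essentially the same route as the paper's: for necessity you delete a deletable vertex to create exactly the unit of slack needed to invoke the canonical property with budget $|\anyds \setminus \{w\}| + 1 = k$, and for sufficiency you argue contrapositively that a minimal $\anyds$ of cardinality $k$ is frozen, since any adjacent dominating set must arise from an addition and thus has cardinality $k+1$. Your observation about the degenerate case $\anyds = \canods$ --- where the length-zero reconfiguration sequence makes $\anyds \sevstep{k} \canods$ hold even though $\anyds$, being minimum, is minimal --- is a genuine subtlety in the lemma as literally stated that the paper's own proof silently passes over.
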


\begin{proof}
{\em Necessity.}
Suppose that $\anyds$ is not minimal.
Then, $\anyds$ contains at least one vertex $x$ which is deletable
from $\anyds$, that is, $\anyds \setminus \{x\}$ forms a dominating set of $G$.
Since $k = |\anyds| = |\anyds \setminus \{x\}| + 1$, we have $\anyds \setminus \{x\} \sevstep{k} \canods$.
Therefore, $\anyds \sevstep{k} \anyds \setminus \{x\} \sevstep{k} \canods$ holds.

\noindent	
{\em Sufficiency.}
We prove the contrapositive.
Suppose that $\anyds$ is minimal.
Then, no vertex in $\anyds$ is deletable and hence any dominating
set $\anyds^\prime$ which is adjacent to $\anyds$ must be obtained by adding a vertex to $\anyds$.
Therefore, $|\anyds^\prime| = k+1$ for any dominating
set $\anyds^\prime$ which is adjacent to $\anyds$. Hence, $\anyds \sevstep{k} \canods$ does not hold.
\qed
\end{proof}
	
We note again that Lemmas~\ref{lem:plusone} and \ref{lem:pluszero} imply that
an actual canonical dominating set is not required to solve the problem.
Furthermore, it can be easily determined in linear time whether a dominating set of a graph $G$ is minimal or not.
Thus, Theorem~\ref{the:canonical} follows from Lemmas~\ref{lem:plusone} and \ref{lem:pluszero}.
\medskip

Before constructing canonical dominating sets in
Sections~\ref{dsr:cograph}--\ref{dsr:alinterval}, we give the following lemma showing that
it suffices to construct a canonical dominating set for a connected graph.

\begin{lemma}[*]\label{lem:conncted}
Let $G$ be a graph consisting of $\cocompnum$ connected components $\cocompg{1}, \cocompg{2}, \ldots, \allowbreak \cocompg{\cocompnum}$.
For each $i \in \{1,2,\ldots, \cocompnum\}$, suppose that $\cocanods_i$ is a canonical dominating set for $\cocompg{i}$.
Then, $\cocanods = \cocanods_1 \cup \cocanods_2 \cup \cdots \cup \cocanods_{\cocompnum}$ is a canonical dominating set for $G$.
\end{lemma}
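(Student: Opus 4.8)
The plan is to verify the two defining properties of a canonical dominating set for $\cocanods = \cocanods_1 \cup \cdots \cup \cocanods_{\cocompnum}$: that $\cocanods$ is a \emph{minimum} dominating set of $G$, and that $\anyds \sevstep{|\anyds|+1} \cocanods$ holds for every dominating set $\anyds$ of $G$.

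For the first property, I would use the fact that, since $\cocompg{1}, \ldots, \cocompg{\cocompnum}$ are the connected components of $G$, no vertex of one component can dominate a vertex of another. Hence a set is a dominating set of $G$ if and only if its intersection with each $V(\cocompg{i})$ is a dominating set of $\cocompg{i}$; in particular the domination number of $G$ equals the sum of those of the components. Since each $\cocanods_i$ is a minimum dominating set of $\cocompg{i}$, the union $\cocanods$ is a dominating set of $G$ of cardinality $\sum_i |\cocanods_i|$, which is exactly the domination number of $G$, so $\cocanods$ is minimum.

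For the second property, fix any dominating set $\anyds$ of $G$ and write $\anyds_i = \anyds \cap V(\cocompg{i})$, which by the observation above is a dominating set of $\cocompg{i}$. Since $\cocanods_i$ is canonical for $\cocompg{i}$, there is a reconfiguration sequence inside $\cocompg{i}$ witnessing $\anyds_i \sevstep{|\anyds_i|+1} \cocanods_i$. I would transform $\anyds$ into $\cocanods$ by executing these component-wise sequences one after another, in the order $i = 1, 2, \ldots, \cocompnum$, each time leaving the vertices outside $\cocompg{i}$ untouched. Concatenating them yields a reconfiguration sequence in $G$ from $\anyds$ to $\cocanods$, since every intermediate set dominates $G$ (each component stays dominated throughout) and consecutive sets differ in exactly one vertex.

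The step I expect to be the crux is verifying the global cardinality bound $|\anyds|+1$. During the phase reconfiguring $\cocompg{i}$, the components $\cocompg{1}, \ldots, \cocompg{i-1}$ already hold $\cocanods_1, \ldots, \cocanods_{i-1}$, the components $\cocompg{i+1}, \ldots, \cocompg{\cocompnum}$ still hold $\anyds_{i+1}, \ldots, \anyds_{\cocompnum}$, and the active component attains at most $|\anyds_i|+1$ vertices. Thus the peak total is at most $\sum_{j<i} |\cocanods_j| + (|\anyds_i|+1) + \sum_{j>i} |\anyds_j|$. Because each $\cocanods_j$ is a minimum dominating set of $\cocompg{j}$ we have $|\cocanods_j| \le |\anyds_j|$, so $\sum_{j<i}|\cocanods_j| \le \sum_{j<i}|\anyds_j|$, and the peak is bounded by $\sum_j |\anyds_j| + 1 = |\anyds| + 1$. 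Hence the concatenated sequence never exceeds the budget, giving $\anyds \sevstep{|\anyds|+1} \cocanods$ as required. The ordering of the components is essential: processing them so that the already-finished components sit at their minimum size is exactly what frees the single unit of slack needed to overshoot in the currently active component.
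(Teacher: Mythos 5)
Your proposal is correct and follows essentially the same route as the paper: transform the components one at a time, and use the fact that each already-finished component holds a \emph{minimum} dominating set (hence no larger than its original intersection with $\anyds$) to keep the global peak at $|\anyds|+1$. The only cosmetic difference is that you also spell out explicitly why $\cocanods$ is a minimum dominating set of $G$, which the paper leaves implicit; note too that what matters is sequential (non-interleaved) processing rather than any particular order of the components.
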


\subsection{Cographs} \label{dsr:cograph}
	
We first define the class of cographs (also known as $P_4$-free graphs)~\cite{BLS99}.
For two graphs $G_1$ and $G_2$, their {\em union $G_1 \cup G_2$} is the graph such
that $V(G_1 \cup G_2) = V(G_1) \cup V(G_2)$ and $E(G_1 \cup G_2) = E(G_1) \cup E(G_2)$, while
their {\em join $G_1 \vee G_2$} is the graph such that $V(G_1 \vee G_2) = V(G_1) \cup V(G_2)$
and $E(G_1 \vee G_2) = E(G_1) \cup E(G_2) \cup \{vw \mid v \in V(G_1), w \in V(G_2) \}$.
Then, a {\em cograph} can be recursively defined as follows:
\begin{listing}{aaa}
	\item[{\rm (1)}] a graph consisting of a single vertex is a cograph;
	\item[{\rm (2)}] if $G_1$ and $G_2$ are cographs, then the union $G_1 \cup G_2$ is a cograph; and
	\item[{\rm (3)}] if $G_1$ and $G_2$ are cographs, then the join $G_1 \vee G_2$ is a cograph.
\end{listing}
	
In this subsection, we show that {\sc dominating set reconfiguration} is solvable in linear time on cographs.
By Theorem~\ref{the:canonical}, it suffices to prove the following lemma.
\begin{lemma} \label{lem:cographcanonical}
Any cograph admits a canonical dominating set.
\end{lemma}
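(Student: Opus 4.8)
The plan is to first reduce to connected cographs using Lemma~\ref{lem:conncted}: since the disjoint union of canonical dominating sets of the components is canonical for the whole graph, it suffices to exhibit a canonical dominating set for every connected cograph. I will then exploit the structural fact that a connected cograph $G$ with at least two vertices is a \emph{join} $G = G_1 \vee G_2$ of two smaller cographs (a cograph produced by a union operation is disconnected, so the top of its construction must be a join). Because every vertex of $G_1$ is adjacent to every vertex of $G_2$, a single vertex $a \in V(G_1)$ dominates all of $V(G_2)$ and a single vertex $b \in V(G_2)$ dominates all of $V(G_1)$; hence $\{a,b\}$ is always a dominating set and $\gamma(G) \le 2$. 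I define the canonical set $\canods$ to be a single universal vertex $\{u\}$ when $G$ has one (so $\gamma(G)=1$), and otherwise $\canods = \{a,b\}$ for some $a \in V(G_1)$, $b \in V(G_2)$ (so $\gamma(G)=2$, since no single vertex dominates). In both cases $\canods$ is a minimum dominating set, as required by the definition of canonical.

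The engine of the argument is the following observation: once the current dominating set contains $\canods$, every other vertex is deletable, because $\canods$ alone dominates $G$; therefore I may delete the surplus vertices one by one, never increasing the size, until only $\canods$ remains. Consequently, for a given dominating set $\anyds$ and budget $k = |\anyds|+1$, it suffices to reach \emph{some} dominating set that contains $\canods$ while never exceeding $k$ vertices, after which $\canods$ is reached for free. When $G$ has a universal vertex $u$, this is immediate: adding $u$ to $\anyds$ yields a dominating set of size at most $k$ that contains $\canods = \{u\}$, so $\anyds \sevstep{k} \canods$ holds.

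The remaining case, $\canods = \{a,b\}$ with no universal vertex, is where the budget must be managed carefully, and I expect it to be the main obstacle. Here every dominating set has at least two vertices (a single vertex would be universal), so $k = |\anyds|+1 \ge 3$. If $\anyds$ already contains one of $a,b$, I simply add the other, reaching size at most $k$ and a set containing $\canods$. If $\anyds$ contains neither, the plan is to first add one of $a,b$, reaching size exactly $k$, then delete a now-redundant vertex of $\anyds$ to free a slot, and finally add the second canonical vertex. The delicate point is to guarantee that such a deletion preserves domination; this rests on the fact that a single vertex on one side of the join already dominates the whole opposite side. A short case analysis on how $\anyds$ is distributed between $V(G_1)$ and $V(G_2)$ — choosing whether to insert $a$ or $b$ first accordingly — shows that, because $|\anyds| \ge 2$, at least one retained vertex always keeps the as-yet-uncovered side dominated, so a safe deletion exists. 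After the second canonical vertex is added, we have a dominating set of size at most $k$ containing $\{a,b\}$, and we shrink to $\canods$ as above. Combining all cases yields $\anyds \sevstep{|\anyds|+1} \canods$ for every dominating set $\anyds$, which is precisely the definition of a canonical dominating set.
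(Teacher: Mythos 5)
Your proposal is correct and follows essentially the same route as the paper: reduce to connected cographs via Lemma~\ref{lem:conncted}, write $G$ as a join, take a universal vertex or one vertex from each side as the canonical set, and reach it by an add--delete--add--delete sequence whose safety rests on the fact that one vertex on either side of the join dominates the entire opposite side. The case analysis you defer (choosing which of $a,b$ to insert first according to how $\anyds$ splits across $V(G_1)$ and $V(G_2)$) is exactly the content of the paper's proof of Lemma~\ref{lem:cograph03} and does go through as you sketch it.
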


As a proof of Lemma~\ref{lem:cographcanonical}, we will construct a canonical dominating set for any cograph $G$.
By Lemma~\ref{lem:conncted}, it suffices to consider the case where $G$ is
connected and we may assume that $G$ has at least two vertices, because otherwise the problem is trivial.
Then, from the definition of cographs, $G$ must be obtained by the join
operation applied to two cographs $\coga$ and $\cogb$, that is, $G = \coga \vee \cogb$.
Notice that any pair $\{\cographa, \cographb\}$ of vertices $\cographa \in V(\coga)$ and $\cographb \in V(\cogb)$ forms a dominating set of $G$.
Let $\cocanods$ be a dominating set of $G$, defined as follows:

\begin{listing}{a}
	\item[-] If there exists a vertex $w \in V(G)$ such that $N[w] = V(G)$, then let $\cocanods = \{w\}$.
	\item[-] Otherwise choose an arbitrary pair of vertices $\cographa \in V(\coga)$
    and $\cographb \in V(\cogb)$ and let $\cocanods = \{\cographa, \cographb\}$.
\end{listing}

\noindent
Clearly, $\cocanods$ is a minimum dominating set of $G$.
We thus prove the following lemma, which completes the proof of Lemma~\ref{lem:cographcanonical}.

\begin{lemma}[*]\label{lem:cograph03}
For every dominating set $\anyds$ of $G$, $\anyds \sevstep{k} \cocanods$ holds, where $k=|\anyds|+1$.
\end{lemma}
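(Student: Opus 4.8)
The plan is to construct, directly and within the budget $k=|\anyds|+1$, a reconfiguration sequence from $\anyds$ to $\cocanods$, splitting according to the two cases in the definition of $\cocanods$. Throughout I will use the defining property of the join $G=\coga\vee\cogb$: every vertex of $V(\coga)$ is adjacent to every vertex of $V(\cogb)$, so a single vertex of $\coga$ already dominates all of $\cogb$ and vice versa. In particular, any \emph{crossing pair} $\{x,y\}$ with $x\in V(\coga)$ and $y\in V(\cogb)$ is a dominating set of $G$, a fact I will invoke repeatedly to certify that intermediate sets remain dominating.

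First I would dispose of the case $\cocanods=\{w\}$ with $N[w]=V(G)$. Here the argument is uniform and needs no subcases: add $w$ to $\anyds$ (if not already present), giving a dominating set of size at most $|\anyds|+1=k$; then delete the remaining vertices one at a time. Since $w$ is universal, every intermediate set contains $w$ and is therefore dominating, and the cardinalities are nonincreasing after the first step, so the whole sequence respects the bound $k$ and ends at $\{w\}=\cocanods$.

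For the case $\cocanods=\{\cographa,\cographb\}$ (so no universal vertex exists), I first observe that $|\anyds|\ge 2$, since a dominating set of size one would be a universal vertex, which is excluded; hence $k\ge 3$, the budget I will need. I then reach a crossing pair as follows. If $\anyds$ already meets both $V(\coga)$ and $V(\cogb)$, pick $x\in \anyds\cap V(\coga)$ and $y\in \anyds\cap V(\cogb)$ and delete every other vertex; each intermediate set still contains the crossing pair $\{x,y\}$, hence is dominating, and sizes only decrease. If instead $\anyds\subseteq V(\coga)$ (the case $\anyds\subseteq V(\cogb)$ being symmetric), I first add an arbitrary $y\in V(\cogb)$, raising the size to $|\anyds|+1=k$, and then delete the vertices of $\anyds$ down to a single $x$; every intermediate set contains a vertex of $\coga$ together with $y$, hence a crossing pair, and so stays dominating while never exceeding $k$. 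In either situation I arrive at a crossing pair $\{x,y\}$.

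Finally I would transform $\{x,y\}$ into $\cocanods=\{\cographa,\cographb\}$ by two token swaps, using at most three vertices at a time: add $\cographa$ (size $3$), delete $x$ to reach $\{y,\cographa\}$ (size $2$), add $\cographb$ (size $3$), and delete $y$ to reach $\{\cographa,\cographb\}$ (size $2$). The key point—and essentially the only place the join structure is needed—is that the intermediate set $\{y,\cographa\}$ is again a crossing pair and hence dominating, while the other intermediates each contain a known dominating pair. Since $k\ge 3$, this stays within budget, and composing the three stages yields $\anyds\sevstep{k}\cocanods$. I expect the main obstacle to be purely bookkeeping: checking the degenerate overlaps (e.g.\ $x=\cographa$ or $y=\cographb$, which only shorten the sequence) and confirming that every deletion acts on a set retaining a crossing pair, so that domination is preserved at each step.
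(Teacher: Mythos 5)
Your proof is correct. Case (i) (a universal vertex) is identical to the paper's. In case (ii) your scheduling of the moves differs from the paper's in a way worth noting: the paper adds $\cographb$ first, then performs one carefully chosen deletion (with a case split on whether $|\anyds_1 \cap V(\coga)| \ge 2$ and on whether $\cographb \in \anyds$) to get back down to at most $|\anyds|$ vertices before adding $\cographa$, so that the budget $|\anyds|+1$ is respected at every step of an interleaved add/delete sequence. You instead first shrink $\anyds$ to an arbitrary crossing pair $\{x,y\}$ (prefixing a single addition when $\anyds$ lies entirely on one side of the join), and only then swap in $\cographa$ and $\cographb$ one at a time. The observation that makes your version work where the paper needs its case analysis is that $|\anyds| \ge 2$ in case (ii) (no universal vertex exists), so $k \ge 3$ and the two excursions to cardinality $3$ in the final swap phase are free. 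Both arguments rest on the same underlying fact --- any set containing a crossing pair of $\coga \vee \cogb$ is dominating --- but your decomposition pushes all the slack into the tail of the sequence and thereby replaces the paper's bookkeeping about $\anyds_2$ with the single inequality $k \ge 3$; the paper's ordering, in exchange, never relies on a lower bound for $|\anyds|$ beyond what minimality of $\cocanods$ gives. Your handling of the degenerate overlaps ($x = \cographa$ or $y = \cographb$) is also adequate, since these only delete steps from the sequence.
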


We have thus proved that any cograph has a canonical dominating set.
Then, Theorem~\ref{the:canonical} gives the following corollary.

\begin{corollary}
{\sc DSR} can be solved in linear time on cographs.
\end{corollary}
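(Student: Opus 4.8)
The plan is to derive the corollary directly from the two results just established: Lemma~\ref{lem:cographcanonical}, which guarantees that every cograph admits a canonical dominating set, and Theorem~\ref{the:canonical}, which turns the mere existence of such a set into a linear-time decision procedure. Since Lemma~\ref{lem:cographcanonical} verifies the hypothesis of Theorem~\ref{the:canonical} for an arbitrary cograph $G$, the corollary follows at once; the substance of the proof therefore lies in unwinding what the algorithm underlying Theorem~\ref{the:canonical} actually does on a cograph and confirming that every step runs within the claimed bound.

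Concretely, given an instance $(G, D_s, D_t, k)$ with $G$ a cograph, the decision splits into two cases according to the recipe supplied by Lemmas~\ref{lem:plusone} and \ref{lem:pluszero}. First I would compare $k$ against $\max\{|D_s|, |D_t|\}$. If $k \ge \max\{|D_s|, |D_t|\} + 1$, then Lemma~\ref{lem:plusone} declares the instance a $\YES$-instance immediately, and the algorithm outputs $\YES$. Otherwise $k = \max\{|D_s|, |D_t|\}$, and by Lemma~\ref{lem:pluszero} the answer is $\YES$ if and only if, for each $i \in \{s, t\}$ with $|D_i| = k$, the set $D_i$ is not minimal. Thus the whole decision reduces to at most two minimality tests on the given dominating sets, which is exactly the point at which Theorem~\ref{the:canonical} promises a linear-time implementation.

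The one place that warrants care is verifying that each minimality test runs in time linear in the input size $O(n+m)$ (note that for cographs $m$ may be quadratic in $n$, so ``linear time'' is understood relative to the encoded graph). To test whether a dominating set $D$ is minimal, I would, for each vertex $v \in V(G)$, record the number of vertices of $D$ lying in its closed neighbourhood $N[v]$; whenever this number equals one, I mark the unique such vertex of $D$ as non-deletable. A vertex $w \in D$ is then deletable precisely when it is never marked, since being marked means some $v$ has $N[v] \cap D = \{w\}$, so that removing $w$ would leave $v$ undominated. A single scan over the adjacency lists computes all these counts and performs the marking in $O(n+m)$ time, so each minimality test — and hence the overall decision — costs $O(n+m)$. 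I do not anticipate a genuine obstacle here: the hard combinatorial work has already been absorbed into Lemma~\ref{lem:cographcanonical} and Theorem~\ref{the:canonical}, leaving only this routine bookkeeping and the case analysis to verify.
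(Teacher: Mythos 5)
Your proof is correct and takes essentially the same route as the paper: the corollary is obtained immediately by combining Lemma~\ref{lem:cographcanonical} with Theorem~\ref{the:canonical}. The extra detail you supply---the case split via Lemmas~\ref{lem:plusone} and \ref{lem:pluszero} and the $O(n+m)$ minimality test---is precisely the bookkeeping the paper absorbs into its proof of Theorem~\ref{the:canonical}, and your implementation of the minimality check (marking the unique dominator of any vertex $v$ with $|N[v] \cap D| = 1$ as non-deletable) is sound.
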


\subsection{Trees}\label{dsr:altree}

In this subsection, we show that {\sc dominating set reconfiguration} is solvable in linear time on trees.
As for cographs, it suffices to prove the following lemma.

\begin{lemma} \label{lem:treecanonical}
Any tree admits a canonical dominating set.
\end{lemma}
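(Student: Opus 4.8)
The plan is to construct an explicit canonical dominating set for any tree $T$ and then verify that every dominating set can reach it while only ever exceeding the current size by one vertex. By Lemma~\ref{lem:conncted} it suffices to treat a single connected tree, so root $T$ at an arbitrary vertex $r$ and process vertices by non-increasing depth. The natural candidate for a canonical set is the greedy bottom-up minimum dominating set: scan leaves upward, and whenever a vertex $v$ is not yet dominated by any already-chosen vertex, add its parent $\mathrm{parent}(v)$ to $\canods$. This is the standard linear-time algorithm producing a minimum dominating set of a tree, so $\canods$ is automatically a minimum dominating set, which is one of the two requirements in the definition of canonical.

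The substantive requirement is reconfigurability: for an arbitrary dominating set $\anyds$ of $T$, I must show $\anyds \sevstep{k} \canods$ with $k = |\anyds|+1$. First I would argue that it is enough to reconfigure $\anyds$ into some minimal dominating set $\anyds'$ with $\anyds' \subseteq \anyds$, since deleting deletable vertices only decreases cardinality and hence never violates the budget $k = |\anyds|+1$; this reduces the problem to reconfiguring between two minimal (indeed minimum, after the greedy fact) dominating sets using one spare token. The core of the proof is therefore a token-sliding-with-one-spare argument: I would process the tree from the leaves toward the root and show that, component of disagreement by component, a vertex of $\anyds'$ can be ``pushed'' toward its canonical position. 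Concretely, if $v \in \anyds'$ but $v \notin \canods$, I would use the spare token to add the nearby canonical vertex dominating the same subtree, re-establish domination, then delete $v$; the rooted, acyclic structure guarantees that such local exchanges can be ordered so that each intermediate set stays a dominating set and has size at most $|\anyds'|+1 \le k$.

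The main obstacle I anticipate is bookkeeping the domination invariant during these exchanges: when I remove a vertex $v$ from the current set, I must be certain every vertex formerly dominated only by $v$ (namely $v$ itself and its private neighbors) is already re-covered by the canonical vertex I just added, and that the single spare token is never ``spent'' in two places at once. The tree structure is exactly what makes this manageable — each non-root vertex has a unique parent, so the greedy choice attaches each undominated vertex to a well-defined canonical dominator, and the exchanges can be sequenced along the depth order without interference between sibling subtrees. I would formalize this as an induction on depth (or on $|\anyds' \setminus \canods|$), with the inductive step performing one add/delete pair that strictly decreases the symmetric difference with $\canods$ while maintaining both the domination property and the cardinality bound $k=|\anyds|+1$. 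Once this is established, $\canods$ satisfies the definition of a canonical dominating set, proving Lemma~\ref{lem:treecanonical}, and Theorem~\ref{the:canonical} then yields a linear-time algorithm on trees.
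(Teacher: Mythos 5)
Your construction of the canonical set is the same as the paper's: the greedy bottom-up rule ``add the parent of any not-yet-dominated vertex'' produces exactly the set $\partb$ of label-$2$ vertices in the paper's labeling, and your leaf-to-root sweep is the same reconfiguration strategy as the paper's Lemma~\ref{lem:treereach}. Two things need fixing, however. First, your parenthetical claim that the minimal set $\anyds'\subseteq\anyds$ obtained by deleting deletable vertices is ``indeed minimum'' is false: in the star $K_{1,3}$ the three leaves form a minimal dominating set of size $3$ while the minimum is $1$. The preliminary reduction to a minimal $\anyds'$ is harmless (it only loosens the budget), but it does not buy you minimality, so you cannot assume $|\anyds'|=|\canods|$ or any one-to-one correspondence between $\anyds'$ and $\canods$.

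The genuine gap is in the budget accounting for the sweep. Your inductive step is ``one add/delete pair that strictly decreases the symmetric difference,'' justified by the assertion that the spare token is never spent in two places at once. What actually makes this true is a counting fact you never establish: the paper partitions $V(T)$ into regions $C_1,\ldots,C_{|\partb|}$, one per canonical vertex $\vercano{i}$ (the vertices of the subtree $T_i$ not covered by earlier subtrees in post-order), and proves in Lemma~\ref{lem:treemin} that \emph{every} dominating set meets \emph{every} $C_i$. This lemma does double duty: it certifies that $\partb$ is minimum, and it guarantees that whenever the sweep must add $\vercano{i}$, the current set already contains some vertex of $C_i\setminus\{\vercano{i}\}$ that can be deleted immediately afterward, so the cardinality is non-increasing from region to region and never exceeds $|\anyds|+1$. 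Without it, your argument as stated can fail: if the current set had no vertex ``near'' some canonical vertex, the exchange would consume the spare token with no compensating deletion, and a second such region would push the size to $|\anyds|+2$. You also need (and only gesture at) the structural observation that makes the domination invariant hold mid-sweep, namely that only the root $\vercano{i}$ of $T_i$ has neighbors outside $T_i$, so after replacing $\anyds_{i-1}\cap C_i$ by $\{\vercano{i}\}$ the vertices of $C_i$ are dominated exactly as they are by $\partb$ while everything outside $T_i$ is still dominated as it was by $\anyds_{i-1}$. Supplying the region partition and the two claims above turns your sketch into the paper's proof.
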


As a proof of Lemma~\ref{lem:treecanonical}, we will construct a canonical dominating set for a tree $T$.
We choose an arbitrary vertex $r$ of degree one in $T$ and regard $T$ as a rooted tree with root $r$.

We first label each vertex in $T$ either $1$, $2$, or $3$, starting from the
leaves of $T$ up to the root $r$ of $T$, as in the following steps (1)--(3);
intuitively, the vertices labeled $2$ will form a dominating set of $T$,
each vertex labeled $1$ will be dominated by its parent, and
each vertex labeled $3$ will be dominated by at least one of its children
(see also \figurename~\ref{fig:tree}(a)):

\begin{listing}{aaa}
	\item[(1)] All leaves in $T$ are labeled $1$.
	\item[(2)] Pick an internal vertex $v$ of $T$, which is not the root, such that all children of $v$ have already been labeled.
					Then,
					\begin{listing}{a}
					\item[-] assign $v$ label $1$ if all children of $v$ are labeled $3$;
					\item[-] assign $v$ label $2$ if at least one child of $v$ is labeled $1$; and
					\item[-] otherwise assign $v$ label $3$.
					\end{listing}
	\item[(3)] Assign the root $r$ (of degree one) label $3$ if its child is labeled $2$, otherwise assign $r$ label $2$.
\end{listing}
For each $i \in \{1,2,3\}$, we denote by $V_i$ the set of all vertices in $T$ that are assigned label $i$.
Then, $\{ \parta, \partb, \partc \}$ forms a partition of $V(T)$.

\begin{figure}[t]
	\centering
		\includegraphics[width=0.85\linewidth]{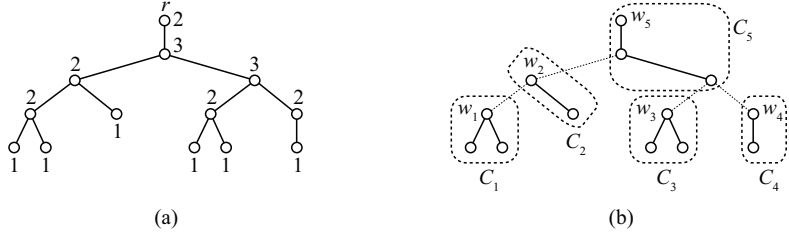}
		\vspace{-1em}
		\caption{(a) The labeling of a tree $T$, and (b) the partition of $V(T)$ into $C_1, C_2, \ldots, C_5$.}
		\vspace{-1em}
	\label{fig:tree}
\end{figure}

We will prove that $\partb$ forms a canonical dominating set of $T$.
We first prove, in Lemmas~\ref{lem:treeds} and \ref{lem:treemin}, that $\partb$ is a minimum dominating set of $T$
and then prove, in Lemma~\ref{lem:treereach}, that $\anyds \sevstep{k} \partb$ holds
for every dominating set $\anyds$ of $T$ and $k = |\anyds|+1$.

\begin{lemma} \label{lem:treeds}
$\partb$ is a dominating set of $T$.
\end{lemma}

\begin{proof}
It suffices to show that both $\parta \subseteq N(\partb)$ and $\partc \subseteq N(\partb)$ hold.

Let $v$ be any vertex in $\parta$, and hence $v$ is labeled $1$.
Then, by the construction above, $v$ is not the root of $T$ and the parent of $v$ must be labeled $2$.
Therefore, $v \in N(\partb)$ holds, as claimed.

Let $u$ be any vertex in $\partc$, and hence $u$ is labeled $3$.
Then, $u$ is not a leaf of $T$. Notice that label $3$ is assigned to a
vertex only when at least one of its children is labeled $2$.
Thus, $u \in N(\partb)$ holds.
%
\qed
\end{proof}

We now prove that $\partb$ is a minimum dominating set of $T$.
To do so, we introduce some notation.
Suppose that the vertices in $\partb$ are ordered as $\vercano{1}, \vercano{2}, \ldots, \vercano{|\partb|}$ by
a post-order depth-first traversal of the tree starting from the root $r$ of $T$.
For each $i \in \{1, 2, \ldots, |\partb| \}$, we denote by $T_i$ the subtree
of $T$ which is induced by $\vercano{i}$ and all its descendants in $T$.
Then, for each $i \in \{1, 2, \ldots, |\partb| \}$, we define a vertex subset $C_i$ of $V(T)$ as follows
(see also \figurename~\ref{fig:tree}(b)):

\begin{eqnarray*}
	C_i = \left\{
	\begin{array}{ll}
		V(T_i) \setminus \bigcup_{j < i} V(T_j) & ~~~\mbox{if $i \neq |\partb|$}; \\
		V(T) \setminus \bigcup_{j < i} V(T_j) & ~~~\mbox{if $i = |\partb|$}.
	\end{array}
	\right.
\end{eqnarray*}

\noindent
Note that $\{ C_1, C_2, \ldots, C_{|\partb|} \}$ forms a partition of $V(T)$.
Furthermore, notice that
\begin{equation} \label{eq:exactlyone}
	\partb \cap C_i = \{ \vercano{i} \}
\end{equation}
holds for every $i \in \{1, 2, \ldots, |\partb| \}$.
Then, Eq.~(\ref{eq:exactlyone}) and the following lemma imply that $\partb$ is a minimum dominating set of $T$.

\begin{lemma}[*]\label{lem:treemin}
Let $\anyds$ be an arbitrary dominating set of $T$.
Then, $|\anyds \cap C_i| \ge 1$ holds for every $i \in \{1, 2, \ldots, |\partb| \}$.
\end{lemma}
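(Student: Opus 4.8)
The plan is to show that every block $C_i$ contains a vertex whose entire closed neighborhood lies inside $C_i$. Once we have such a vertex $v_i \in C_i$ with $N[v_i] \subseteq C_i$, the conclusion is immediate: since $\anyds$ is a dominating set it must contain some vertex of $N[v_i]$ in order to dominate $v_i$, and because $N[v_i] \subseteq C_i$ this vertex lies in $C_i$, giving $|\anyds \cap C_i| \ge 1$. So the whole argument reduces to locating a good ``anchor'' vertex $v_i$ in each block and checking the containment $N[v_i] \subseteq C_i$.

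First I would record how $C_i$ sits inside $T_i$. Because $\partb$ is ordered by a post-order traversal, every $\vercano{j}$ that is a strict descendant of $\vercano{i}$ satisfies $j < i$; hence the sets $V(T_j)$ subtracted from $V(T_i)$ in the definition of $C_i$ are exactly the subtrees rooted at the label-$2$ strict descendants of $\vercano{i}$. Consequently, a descendant $u$ of $\vercano{i}$ belongs to $C_i$ if and only if no label-$2$ vertex other than $\vercano{i}$ appears on the path from $u$ up to $\vercano{i}$. This reformulation is what makes the neighborhood check routine.

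For the main case, where $\vercano{i}$ is not the root, the labeling rule (step (2)) guarantees that $\vercano{i}$, being labeled $2$, has a child $c$ labeled $1$; I take $v_i = c$. Then $N[c] = \{c\} \cup \{\vercano{i}\} \cup (\text{children of } c)$, and I verify each piece lies in $C_i$: the parent $\vercano{i}$ is in $C_i$ by definition; the vertex $c$ is labeled $1$ and, by step (2), all children of $c$ are labeled $3$, so none of $c$ and its children is a label-$2$ vertex; since these are precisely the proper descendants of $\vercano{i}$ occurring on the paths from $c$ and from each child of $c$ up to $\vercano{i}$, none of them is deleted by an earlier subtree $T_j$, and hence all of $N[c]$ lies in $C_i$. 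Note that when $i=|\partb|$ but $\vercano{i}\neq r$, the special definition only enlarges $C_{|\partb|}$, so this containment still holds.

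The hard part, and essentially the only situation requiring separate treatment, is the root. If $r$ is labeled $2$ then $r = \vercano{|\partb|}$ and the special definition $C_{|\partb|} = V(T) \setminus \bigcup_{j<|\partb|} V(T_j)$ applies. If $r$ has a child labeled $1$, the previous argument carries over verbatim. Otherwise, by step (3), $r$ is labeled $2$ exactly because its unique child $c'$ is labeled $3$; here I take $v_i = r$ and use that $r$ has degree one, so $N[r] = \{r, c'\}$, with both vertices in $C_{|\partb|}$ because $c'$ is labeled $3$ and its only label-$2$ ancestor is $r = \vercano{|\partb|}$ itself. In every case I obtain a vertex $v_i \in C_i$ with $N[v_i] \subseteq C_i$, and the domination argument of the first paragraph then yields $|\anyds \cap C_i| \ge 1$ for all $i$, as required.
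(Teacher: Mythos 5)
Your proof is correct and follows essentially the same strategy as the paper's: exhibit in each $C_i$ a vertex $u$ with $N[u] \subseteq C_i$, so that any dominating set must intersect $C_i$. Your case split (root versus non-root $\vercano{i}$, with the label-$1$ child as anchor) reorganizes but matches the paper's case analysis, and your explicit justification of why $N[u]$ survives the subtraction of earlier subtrees is a welcome bit of extra care.
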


We finally prove the following lemma, which completes the proof of Lemma~\ref{lem:treecanonical}.
\begin{lemma}[*]\label{lem:treereach}
For every dominating set $\anyds$ of $T$, $\anyds \sevstep{k} \partb$ holds, where $k = |\anyds| +1$.
\end{lemma}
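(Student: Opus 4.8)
The plan is to transform any dominating set $\anyds$ into $\partb = \{\vercano{1}, \ldots, \vercano{|\partb|}\}$ by processing the parts $C_1, C_2, \ldots, C_{|\partb|}$ one at a time, in the post-order in which they were defined, cleaning up one part before moving to the next. Concretely, I would maintain the invariant that after part $C_i$ has been handled, the current dominating set coincides with $\partb$ on $C_1 \cup \cdots \cup C_i$ (it contains exactly $\vercano{1}, \ldots, \vercano{i}$ there) and is untouched on the remaining parts (it equals $\anyds \cap C_j$ for every $j > i$). To handle part $C_i$, I would first add $\vercano{i}$ if it is not already present, and then delete, one by one, every vertex of $C_i$ other than $\vercano{i}$ that currently belongs to the set.

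Before arguing correctness, I would prove a \emph{locality} lemma for the labeling: for every $i$ and every $x \in C_i \setminus \{\vercano{i}\}$, all neighbors of $x$ lie in $C_1 \cup \cdots \cup C_i$; equivalently, $x$ has no neighbor in a later part. This is read off from the labeling rules: if $x$ is labeled $1$ then its parent is labeled $2$ and must be $\vercano{i}$ itself, while each of its (labeled $3$) children is again governed by $\vercano{i}$ and lies in $C_i$; if $x$ is labeled $3$ then its parent still lies in $C_i$, whereas each of its children is either a labeled-$2$ vertex $\vercano{j}$ with $j < i$ (a descendant of $\vercano{i}$, hence visited earlier in the post-order) or a labeled-$3$ vertex again in $C_i$. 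Alongside this I would record the companion observation that, once $\vercano{1}, \ldots, \vercano{i}$ are all present, every vertex of $C_1 \cup \cdots \cup C_i$ is dominated by them: a labeled-$1$ vertex of $C_{i'}$ is dominated by its parent $\vercano{i'}$, a labeled-$3$ vertex of $C_{i'}$ is dominated by a labeled-$2$ child $\vercano{j}$ with $j < i'$, and $\vercano{i'}$ dominates itself.

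With these two facts the transformation is correct. Adding $\vercano{i}$ never destroys domination, so I only need to check the deletions. When a vertex $x \in C_i \setminus \{\vercano{i}\}$ is removed, $\vercano{1}, \ldots, \vercano{i}$ are already in the set, so by the companion observation $x$ itself and all of its neighbors---which by the locality lemma lie in $C_1 \cup \cdots \cup C_i$---remain dominated; moreover, since $x$ has no neighbor in a later part, no vertex of $C_{i+1} \cup \cdots \cup C_{|\partb|}$ (still dominated exactly as in $\anyds$) is affected. Hence every intermediate set is a dominating set. For the cardinality bound, Lemma~\ref{lem:treemin} gives $|\anyds \cap C_j| \ge 1$ for all $j$, so when part $C_i$ is reached the current set has size at most $(i-1) + \sum_{j \ge i} |\anyds \cap C_j| \le |\anyds|$; the single addition of $\vercano{i}$ raises this to at most $|\anyds| + 1 = k$ before the subsequent deletions bring it back down. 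Thus every set in the sequence has size at most $k$, and after the last part is handled the current set equals $\partb$, giving $\anyds \sevstep{k} \partb$.

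The main obstacle is the correctness of the deletions: a priori, removing the surplus vertices of $C_i$ could undominate a vertex in a not-yet-processed part. The locality lemma is precisely what rules this out, so the heart of the proof is the careful case analysis of the labeling that establishes it; by contrast, the cardinality bookkeeping is routine once Lemma~\ref{lem:treemin} is in hand.
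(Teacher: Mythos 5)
Your proposal is correct and follows essentially the same route as the paper: the same part-by-part transformation (add $\vercano{i}$, then delete the surplus of $C_i$), with domination preserved because all changes other than at $\vercano{i}$ are confined to vertices whose whole closed neighborhood lies in already-processed territory (the paper phrases this as $\vercano{i}$ being the only vertex of $T_i$ with a neighbor outside $V(T_i)$), and the same appeal to Lemma~\ref{lem:treemin} for the cardinality bound $|\anyds|+1$. The only nit is that a label-$3$ root is dominated by its child $\vercano{i'}$ itself (so $j \le i'$ rather than $j < i'$ in your companion observation), which does not affect the argument.
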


We have thus proved that $\partb$ forms a canonical dominating set for any tree $T$.
Then, Theorem~\ref{the:canonical} gives the following corollary.
\begin{corollary}
{\sc DSR} can be solved in linear time on trees.
\end{corollary}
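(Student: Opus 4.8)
The plan is to derive the corollary directly from the general scheme, so that essentially all of the work has already been carried out in the preceding lemmas. The key fact to record is that Lemma~\ref{lem:treecanonical} establishes the existence of a canonical dominating set for every tree $T$ (namely the set $\partb$ produced by the labeling procedure), while Theorem~\ref{the:canonical} converts the existence of \emph{any} canonical dominating set into a linear-time decision algorithm. Hence I would simply argue: since any tree admits a canonical dominating set by Lemma~\ref{lem:treecanonical}, the hypothesis of Theorem~\ref{the:canonical} is satisfied, and therefore \DSR\ is solvable in linear time on trees.

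To keep the argument self-contained I would then recall why the resulting algorithm actually runs in linear time, tracing through the proof of Theorem~\ref{the:canonical}. Given an instance $(T, D_s, D_t, k)$, the algorithm first compares $k$ with $\max\{|D_s|, |D_t|\}$. If $k \ge \max\{|D_s|, |D_t|\}+1$, then by Lemma~\ref{lem:plusone} the instance is a $\YES$-instance and we answer accordingly. Otherwise $k = \max\{|D_s|, |D_t|\}$, and by Lemma~\ref{lem:pluszero} the instance is a $\YES$-instance if and only if, for each $i \in \{s,t\}$ with $|D_i| = k$, the set $D_i$ is not minimal. Crucially, by Lemma~\ref{lem:minimal} this reduces the entire decision to a minimality test, so no canonical dominating set ever needs to be constructed; only its existence, guaranteed by Lemma~\ref{lem:treecanonical}, is invoked.

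It then remains to observe that minimality of a given dominating set $D_i$ can be checked within the linear-time budget. For this I would, in a single traversal, precompute for every vertex $u \in V(T)$ the domination count $|N[u] \cap D_i|$; a vertex $w \in D_i$ is deletable exactly when every $u \in N[w]$ has domination count at least two, which is verified by scanning the closed neighborhoods once more. Since a tree has $n-1$ edges, both passes take $O(n)$ time, making the whole procedure linear. The main obstacle is not in this corollary, which is essentially immediate, but in the already-completed Lemmas~\ref{lem:treecanonical} and \ref{lem:treereach}; the only point requiring care here is confirming that the minimality test—the sole computation the algorithm performs—is genuinely implementable in linear time.
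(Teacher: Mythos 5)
Your proposal is correct and follows exactly the paper's route: the corollary is obtained by combining Lemma~\ref{lem:treecanonical} with Theorem~\ref{the:canonical}, whose proof (via Lemmas~\ref{lem:plusone} and \ref{lem:pluszero}) reduces the decision to comparing $k$ with $\max\{|D_s|,|D_t|\}$ and testing minimality, with no canonical set ever constructed. Your explicit linear-time minimality test (domination counts $|N[u]\cap D_i|$, with $w$ deletable iff every $u\in N[w]$ has count at least two) correctly fills in the detail the paper dismisses as easy, so nothing is missing.
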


\subsection{Interval graphs}\label{dsr:alinterval}
A graph $G$ with $V(G) = \{v_1, v_2, \ldots, v_n\}$ is an {\em interval graph} if
there exists a set $\calI$ of (closed) intervals $I_1, I_2, \ldots, I_n$ such
that $v_i v_j \in E(G)$ if and only if $I_i \cap I_j \neq \emptyset$ for each $i,j \in \{1, 2, \ldots, n \}$.
We call the set $\calI$ of intervals an {\em interval representation} of the graph.
For a given graph $G$, it can be determined in linear time whether $G$ is
an interval graph, and if so obtain an interval representation of $G$~\cite{KM89}.
In this subsection, we show that {\sc dominating set reconfiguration} is solvable in linear time on interval graphs.
As for cographs, it suffices to prove the following lemma.
\begin{lemma}\label{lemma:interval01}
Any interval graph admits a canonical dominating set.
\end{lemma}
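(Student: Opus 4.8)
The plan is to mimic the successful strategies used for trees and cographs: construct an explicit canonical dominating set from the interval representation and then verify the two required properties, namely that it is a minimum dominating set and that every dominating set $\anyds$ of $G$ satisfies $\anyds \sevstep{k} \cocanods$ for $k = |\anyds|+1$. By Lemma~\ref{lem:conncted} it suffices to treat a connected interval graph. The natural construction is the classical greedy left-to-right sweep: sort the intervals by right endpoint, and repeatedly pick the interval whose left endpoint is leftmost among those not yet dominated, choosing among candidates the one reaching furthest to the right. This produces a canonical set $\cocanods = \{\vercano{1}, \vercano{2}, \ldots, \vercano{|\cocanods|}\}$ whose members are ordered by increasing right endpoint, and it is standard that this greedy choice yields a minimum dominating set on interval graphs.

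First I would set up the interval representation $\calI$ with $\interval{v_i} = [\intervala{v_i}, \intervalb{v_i}]$ and formalize the greedy selection, then partition $V(G)$ into consecutive ``zones'' $C_1, C_2, \ldots, C_{|\cocanods|}$ much as the tree proof partitions via post-order traversal. Here $C_i$ would collect the vertices first dominated by $\vercano{i}$ but not by any earlier chosen vertex, so that $\cocanods \cap C_i = \{\vercano{i}\}$ and $\{C_1, \ldots, C_{|\cocanods|}\}$ partitions $V(G)$. The minimality argument then parallels Lemma~\ref{lem:treemin}: I would show that any dominating set $\anyds$ must place at least one vertex in each zone $C_i$, because the greedy step guarantees that the vertices in $C_i$ form a region that cannot all be dominated from outside $C_i \cup \{\text{a boundary vertex}\}$; counting gives $|\anyds| \ge |\cocanods|$, so $\cocanods$ is minimum.

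The reconfiguration part, analogous to Lemma~\ref{lem:treereach}, is where I expect the main obstacle to lie. The goal is to transform an arbitrary dominating set $\anyds$ into $\cocanods$ using only sets of size at most $|\anyds|+1$, i.e. with a single token of ``slack.'' The intended approach is to process the zones left to right: while staying inside budget, add the canonical vertex $\vercano{i}$ for the current zone (temporarily reaching size $|\anyds|+1$), then argue that some vertex of $\anyds$ lying in the already-processed region has become deletable and remove it to return to size $|\anyds|$. The linear ordering of intervals is precisely what makes this work, since once the prefix $\{\vercano{1}, \ldots, \vercano{i}\}$ is in place it dominates an initial segment of the line, freeing the original tokens there. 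The delicate points will be verifying that after each addition a genuinely deletable original vertex always exists within budget, handling the interaction when $\anyds$ and $\cocanods$ share vertices, and ensuring every intermediate set remains a dominating set rather than merely having the right cardinality; these invariants must be stated and maintained carefully across the induction on $i$.

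Once both properties are established, $\cocanods$ qualifies as a canonical dominating set for $G$, which proves Lemma~\ref{lemma:interval01}; combined with Theorem~\ref{the:canonical} this yields a linear-time algorithm for {\sc DSR} on interval graphs, with the $O(n)$ bound on reconfiguration length following from the fact that the transformation touches each zone a bounded number of times.
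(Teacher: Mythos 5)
Your overall strategy is the same as the paper's: greedily build a minimum dominating set $\cocanods$ from the interval representation by a left-to-right sweep (undominated interval with smallest right endpoint, dominated by its neighbor of largest right endpoint), partition $V(G)$ into consecutive zones $C_1,\ldots,C_{|\cocanods|}$ with $\cocanods\cap C_i=\{\vercano{i}\}$, prove minimality by showing every dominating set meets every zone, and reconfigure zone by zone using one unit of slack. However, the part you yourself identify as ``where the main obstacle lies'' is exactly the part the paper has to prove, and your proposal does not actually carry it out; as written there is a genuine gap. Two concrete issues. First, your zone definition (``vertices first dominated by $\vercano{i}$ but not by any earlier chosen vertex'') does not in general satisfy $\cocanods\cap C_i=\{\vercano{i}\}$: consecutive canonical vertices can be adjacent (e.g.\ $\vercano{1}=[0,5]$, next undominated interval $[6,7]$ with best neighbor $\vercano{2}=[4,20]$), in which case $\vercano{2}$ is already dominated by $\vercano{1}$ and falls into $C_1$. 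The paper instead partitions purely by right-endpoint thresholds, $C_i=\{v\mid \intervalb{\vercano{i-1}}<\intervalb{v}\le\intervalb{\vercano{i}}\}$, which is what makes both the minimality proof and the reconfiguration proof go through cleanly.

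Second, the reconfiguration step needs a proof, not an intention. The paper's argument is: at stage $i$, add $\vercano{i}$ if absent, then delete \emph{all} vertices of the current set in $C_i\setminus\{\vercano{i}\}$ (not just one ``deletable'' vertex from the processed region --- deleting only one per stage would leave extra vertices behind and never reach $\cocanods$). That the resulting set $\anyds_i$ still dominates $G$ is shown by a three-case split on $\intervalb{\vercano{i}}$: vertices $v$ with $\intervalb{v}\le\intervalb{\vercano{i}}$ are dominated by the already-installed canonical prefix, vertices with $\intervala{v}\ge\intervalb{\vercano{i}}$ are dominated by the untouched suffix of $\anyds$, and vertices straddling $\intervalb{\vercano{i}}$ are dominated by $\vercano{i}$ itself. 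The budget bound is not a matter of ``some vertex becoming deletable'': it follows from the minimality lemma ($|\anyds_{i-1}\cap C_i|\ge 1$), which guarantees that whenever $\vercano{i}$ had to be added, at least one old vertex in $C_i$ gets deleted, so $|\anyds_i|\le|\anyds_{i-1}|$ and the peak is $|\anyds|+1$. Without the threshold-based zones and this explicit verification, your plan does not yet constitute a proof of the lemma.
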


As a proof of Lemma~\ref{lemma:interval01}, we will construct a canonical dominating set for any interval graph $G$.
By Lemma~\ref{lem:conncted} it suffices to consider the case where $G$ is connected.
Let $\calI$ be an interval representation of $G$.
For an interval $I \in \calI$, we denote by $\intervala{I}$ and $\intervalb{I}$ the left and right endpoints of $I$, respectively;
we sometimes call the values $\intervala{I}$ and $\intervalb{I}$ the {\em $l$-value} and {\em $r$-value} of $I$, respectively.
%
As for trees, we first label each vertex in $G$ either $1$, $2$, or $3$, from left to right;
the vertices labeled $2$ will form a dominating set of $G$ (see \figurename~\ref{fig:interval} as an example):
\begin{listing}{aaa}
\item[(1)] Pick the unlabeled vertex $v_i$ which has the minimum $r$-value among all unlabeled vertices and assign $v_i$ label $1$.
\item[(2)] Let $v_j$ be the vertex in $N[v_i]$ which has the maximum $r$-value among all vertices in $N[v_i]$.
            Note that $v_j$ may have been already labeled and $v_j = v_i$ may hold.
			We (re)label $v_j$ to $2$.
\item[(3)] For each unlabeled vertex in $N(v_j)$, we assign it label $3$.
\end{listing}

\noindent
We execute steps (1)--(3) above until all vertices are labeled.
For each $i \in \{1,2,3\}$, we denote by $V_i$ the set of all vertices in $G$ that are assigned label $i$.
Then, $\{ \parta, \partb, \partc \}$ forms a partition of $V(G)$.

By the construction above, it is easy to see that $\partb$ forms a dominating set of $G$.
We thus prove that $\partb$ is canonical in Lemmas~\ref{lemma:interval03}
and \ref{lemma:interval04}, that is, $\partb$ is a minimum dominating
set of $G$ (in Lemma~\ref{lemma:interval03}) and $\anyds \sevstep{k} \partb$ holds for
every dominating set $\anyds$ of $G$ and $k=|\anyds|+1$ (in Lemma~\ref{lemma:interval04}).

\begin{figure}[t]
	\centering
		\includegraphics[width=0.85\linewidth]{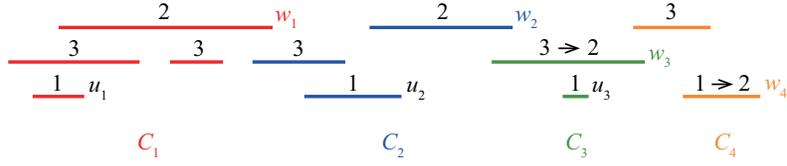}
		\vspace{-1em}
		\caption{The labeling of an interval graph in the interval representation.}
		\vspace{-1em}
	\label{fig:interval}
\end{figure}





We now prove that the dominating set $\partb$ of $G$ is minimum.
To do so, we introduce some notation.
Assume that the vertices in $\partb$ are ordered as $w_1, w_2, \ldots , w_{|\partb|}$ such
that $\intervalb{w_1} < \intervalb{w_2} < \cdots < \intervalb{w_{|\partb|}}$.
For each $i \in \{ 1, 2, \ldots , |\partb| \}$, we define the vertex subset
$C_i$ of $V(G)$ as follows (see \figurename~\ref{fig:interval} as an example):
	
\begin{eqnarray}\label{eq:interval:minimum00}
	C_i = \left\{
			\begin{array}{lrlll}
			\{ v \mid &                            &\intervalb{v} \le \intervalb{w_1} & \} &\mbox{ if $i=1$}; \\
			\{ v \mid & \intervalb{w_{i-1}} < & \intervalb{v} \le \intervalb{w_i} & \} &\mbox{ if $2 \le i \le |\partb|-1$}; \\
			\{ v \mid & \intervalb{w_{|\partb|-1}}  < & \intervalb{v}                     & \} &\mbox{ if $i=|\partb|$}.
			\end{array} \right.
\end{eqnarray}

\noindent
Note that $\{ C_1, C_2, \ldots, C_{|\partb|} \}$ forms a partition of $V(G)$ such that
\begin{eqnarray}\label{eq:interval:minimum01}
	\partb \cap C_i = \{ w_i \}
\end{eqnarray}
holds for every $i \in \{1, 2, \ldots, |\partb| \}$.
Then, Eq.~(\ref{eq:interval:minimum01}) and the following lemma imply that $\partb$ is a minimum dominating set of $G$.
	
\begin{lemma}[*]\label{lemma:interval03}
Let $\anyds$ be an arbitrary dominating set of $G$.
Then, $|\anyds \cap C_i| \ge 1$ holds for every $i \in \{1, 2, \ldots, |\partb| \}$.
\end{lemma}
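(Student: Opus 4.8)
The plan is to exhibit, for each index $i$, a single vertex whose entire closed neighborhood is confined to $C_i$; since any dominating set $\anyds$ must contain a vertex of that neighborhood, this immediately forces $|\anyds \cap C_i| \ge 1$. Concretely, for the $i$-th execution of the labeling loop (steps (1)--(3)), let $p_i$ denote the vertex assigned label $1$ in step (1), i.e.\ the unlabeled vertex of minimum $r$-value at that moment, and let $w_i$ be the vertex relabeled to $2$ in step (2), which by construction is the vertex of maximum $r$-value in $N[p_i]$. (I will also verify that the loop relabels its label-$2$ vertices in strictly increasing order of $r$-value, so this indexing agrees with the ordering $w_1, w_2, \ldots, w_{|\partb|}$ used to define the $C_i$.) The goal is to show $N[p_i] \subseteq C_i$ and then conclude.

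First I would record two easy facts from the interval representation. Since $w_i \in N[p_i]$ maximizes the $r$-value over $N[p_i]$, every $u \in N[p_i]$ satisfies $\intervalb{u} \le \intervalb{w_i}$; this is exactly the upper bound appearing in the definition of $C_i$, and in particular it already settles the case $i = 1$, where only $\intervalb{v} \le \intervalb{w_1}$ is required. Moreover, since $w_i$ overlaps $p_i$ with $\intervalb{w_i} \ge \intervalb{p_i}$, we have $\intervala{w_i} \le \intervalb{p_i}$, a relation I will use below.

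The heart of the argument is the following invariant, which I would establish for every $i \ge 2$: \emph{at the start of the $i$-th execution of the loop, every still-unlabeled vertex $v$ satisfies $\intervala{v} > \intervalb{w_{i-1}}$.} Suppose instead that some unlabeled $v$ had $\intervala{v} \le \intervalb{w_{i-1}}$. Being unlabeled at the start of iteration $i$, it was also unlabeled at the start of iteration $i-1$, so the minimality in step (1) gives $\intervalb{v} \ge \intervalb{p_{i-1}}$; combining this with $\intervala{w_{i-1}} \le \intervalb{p_{i-1}}$ yields $\intervalb{v} \ge \intervala{w_{i-1}}$. Together with $\intervala{v} \le \intervalb{w_{i-1}}$, this means the intervals of $v$ and $w_{i-1}$ intersect, so $v \in N(w_{i-1})$; but then step (3) of iteration $i-1$ would have assigned $v$ the label $3$, contradicting that $v$ is unlabeled. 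Applying the invariant to $v = p_i$ gives $\intervala{p_i} > \intervalb{w_{i-1}}$, hence every $u \in N[p_i]$ satisfies $\intervalb{u} \ge \intervala{p_i} > \intervalb{w_{i-1}}$, which is the lower bound in the definition of $C_i$ (and the only constraint needed at the boundary index $i = |\partb|$).

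Combining the two bounds gives $\intervalb{w_{i-1}} < \intervalb{u} \le \intervalb{w_i}$ for every $u \in N[p_i]$, i.e.\ $N[p_i] \subseteq C_i$, with the appropriate one-sided version at the two boundary indices. As a by-product, applying this with $u = w_i$ shows $\intervalb{w_i} > \intervalb{w_{i-1}}$, confirming that the loop produces the label-$2$ vertices in strictly increasing order of $r$-value and hence justifying the identification of the iteration index with the ordering used in Eq.~(\ref{eq:interval:minimum00}). Finally, because $\anyds$ is a dominating set we have $p_i \in N[\anyds]$, so some vertex of $\anyds$ lies in $N[p_i] \subseteq C_i$; therefore $|\anyds \cap C_i| \ge 1$ for every $i \in \{1,2,\ldots,|\partb|\}$, as required. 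I expect the main obstacle to be the invariant in the third paragraph---pinning down precisely why an unlabeled vertex cannot have a small left endpoint---while the remaining steps are short deductions from the interval representation.
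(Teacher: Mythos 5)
Your proof is correct. The core of both arguments is the same observation — for each $i$ there is a vertex whose closed neighborhood lies entirely inside $C_i$, so any dominating set must meet $C_i$ — but the execution is genuinely different. The paper proceeds by contradiction: assuming $\anyds \cap C_i = \emptyset$, it takes as witness the $i$-th vertex $u_i$ of $\parta$ in $r$-value order (switching to $w_{|\partb|}$ in the boundary case $i=|\partb|$ with $|\partb|=|\parta|+1$), merely asserts that $\parta \cap C_i = \{u_i\}$, and rules out a dominator in $C_-$ or $C_+$ by showing each would violate the labeling rules. You instead take as witness the vertex $p_i$ chosen in step (1) of the $i$-th iteration, prove the explicit invariant that every vertex still unlabeled at the start of iteration $i$ has $l$-value exceeding $\intervalb{w_{i-1}}$, and conclude $N[p_i] \subseteq C_i$ directly and uniformly, the boundary indices being handled by simply dropping the unneeded inequality. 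What your route buys: the strict ordering $\intervalb{w_1} < \cdots < \intervalb{w_{|\partb|}}$ and the agreement between the iteration order and the sorted order used in Eq.~(\ref{eq:interval:minimum00}) fall out as by-products rather than being presupposed, there is no case split and no reliance on the structure of $\parta$, and the argument is constructive rather than by contradiction. In spirit your write-up is closer to the paper's tree analogue (Lemma~\ref{lem:treemin}), which states the ``closed neighborhood inside $C_i$'' idea explicitly; what the paper's version buys in exchange is that it works directly with the final labels $\parta,\partb$ without having to reason about the intermediate states of the labeling loop.
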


We finally prove the following lemma, which completes the proof of Lemma~\ref{lemma:interval01}.

\begin{lemma}[*]\label{lemma:interval04}
For every dominating set $\anyds$ of $G$, $\anyds \sevstep{k} \partb$ holds, where $k=|\anyds|+1$.
\end{lemma}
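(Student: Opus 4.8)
The goal is to show that an arbitrary dominating set $\anyds$ of the connected interval graph $G$ can be transformed into the canonical dominating set $\partb$ while never exceeding $k = |\anyds|+1$ tokens. Since $\partb$ is a minimum dominating set (Lemma~\ref{lemma:interval03}), I have $|\partb| \le |\anyds|$, so I have one spare token to play with throughout. The plan is to transform $\anyds$ into $\partb$ one block at a time, processing the blocks $C_1, C_2, \ldots, C_{|\partb|}$ of the partition defined in Eq.~(\ref{eq:interval:minimum00}) from left to right, and maintaining the invariant that once block $C_i$ has been processed, the current dominating set agrees with $\partb$ on $C_1 \cup \cdots \cup C_i$, i.e.\ it contains exactly $w_1, \ldots, w_i$ among the vertices seen so far and no others.

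First I would establish the single-step primitive: given the current dominating set $\anyds'$, I want to add $w_i$ (if it is not already present) and then delete the vertices of $\anyds' \cap C_i$ other than $w_i$. The key geometric fact to verify is that after adding $w_i$, each such deletion is safe, i.e.\ leaves a dominating set. Here I would exploit the interval structure and Lemma~\ref{lemma:interval03}: because $w_i$ has the property (from the labeling construction, step~(2)) that it has the maximum $r$-value in the neighborhood of the label-$1$ vertex that generated it, every vertex whose interval lies within block $C_i$ is dominated by $w_i$; vertices to the left are already handled by $w_1, \ldots, w_{i-1}$, and vertices further right are either dominated by $w_i$ through the overlap at its right endpoint or will be covered once we reach their own block. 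The one spare token (the ``$+1$'') is exactly what lets me insert $w_i$ \emph{before} removing the redundant vertices of $C_i$, so that cardinality peaks at $|\anyds|+1 \le k$ and never higher.

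The main obstacle I anticipate is the careful accounting that each intermediate removal preserves domination \emph{globally}, not just within the current block: removing a vertex of $C_i$ could in principle un-dominate a vertex lying in a later block $C_j$ with $j>i$ whose interval reaches leftward into $C_i$. I would resolve this by arguing, via the $r$-value ordering $\intervalb{w_1}<\cdots<\intervalb{w_{|\partb|}}$ and the construction of the $C_i$, that any vertex extending into $C_i$ from the right still overlaps $w_i$ (whose right endpoint dominates the block), so $w_i$ absorbs all such domination duties; hence the redundant vertices of $C_i$ are genuinely deletable once $w_i$ is present. A secondary subtlety is the boundary behavior at $i=1$ and $i=|\partb|$, where the block definitions are one-sided; these I would treat as easy special cases of the same argument. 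Concatenating the per-block reconfiguration subsequences, each of which stays within $k$ tokens, yields the full reconfiguration sequence $\anyds \sevstep{k} \partb$, completing the proof.
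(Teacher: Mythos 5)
Your proposal follows essentially the same route as the paper's proof: process the blocks $C_1,\ldots,C_{|\partb|}$ from left to right, add $w_i$ first (using the one spare token) and then delete the remaining vertices of $C_i$, justify each deletion by the fact that any interval crossing $\intervalb{w_i}$ meets $w_i$ while intervals entirely to the right are still served by the untouched part of $\anyds$, and use Lemma~\ref{lemma:interval03} to argue the cardinality returns to at most $|\anyds|$ after each block. This matches the paper's argument in both structure and the key geometric observation, so the plan is sound.
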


Combining Lemma~\ref{lemma:interval01} and Theorem~\ref{the:canonical} yields the following corollary.
\begin{corollary} \label{cor:interval}
{\sc DSR} can be solved in linear time on interval graphs.
\end{corollary}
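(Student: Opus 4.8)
The plan is to derive Corollary~\ref{cor:interval} directly by feeding Lemma~\ref{lemma:interval01} into Theorem~\ref{the:canonical}. Since Lemma~\ref{lemma:interval01} establishes that every interval graph admits a canonical dominating set, and Theorem~\ref{the:canonical} asserts that {\sc DSR} is linear-time solvable on any graph possessing such a set, the conclusion is immediate. What remains is to make the resulting decision procedure explicit and to confirm that each of its steps runs in $O(n+m)$ time.

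First I would unwind the procedure implicit in Theorem~\ref{the:canonical}. Given an instance $(G, D_s, D_t, k)$ with $G$ an interval graph, set $m = \max\{|D_s|, |D_t|\}$; recall the input guarantees $k \ge m$. If $k \ge m+1$, then Lemma~\ref{lem:plusone} certifies a $\YES$-instance and the algorithm outputs $\YES$ with no further work. Otherwise $k = m$, and Lemma~\ref{lem:pluszero} reduces the question to a minimality test: the instance is a $\YES$-instance if and only if, for every $i \in \{s,t\}$ with $|D_i| = k$, the set $D_i$ is not minimal. Crucially, as emphasized after Lemma~\ref{lem:minimal}, only the \emph{existence} of a canonical dominating set is invoked (supplied by Lemma~\ref{lemma:interval01}); the algorithm never has to run the labeling procedure of Section~\ref{dsr:alinterval} or construct $\partb$ explicitly.

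Second, I would verify the linear-time bound. Computing $|D_s|$, $|D_t|$ and comparing against $k$ is trivial, so the only nontrivial step is testing whether a given dominating set $D$ is minimal, i.e.\ whether some $w \in D$ is deletable. I would precompute, in one pass over the incidence lists, the domination count $c(u) = |N[u] \cap D|$ for each vertex $u$; a vertex $w \in D$ is then deletable if and only if every $u \in N[w]$ satisfies $c(u) \ge 2$ (equivalently, $w$ has no private neighbor with respect to $D$). A second pass over the incidence lists performs all these deletability checks, so the minimality test, and hence the whole algorithm, runs in $O(n+m)$ time. Note that an interval representation, obtainable in linear time~\cite{KM89}, is not even required for this test.

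Because the genuine combinatorial content, namely the construction and correctness of the canonical dominating set $\partb$, is entirely packaged inside Lemma~\ref{lemma:interval01} (via Lemmas~\ref{lemma:interval03} and~\ref{lemma:interval04}), the corollary poses no real obstacle of its own. The only point requiring care is the bookkeeping in the minimality test: one must maintain the domination counts $c(u)$ rather than re-testing the dominating-set property from scratch after each candidate deletion, since the naive approach would cost $\Theta(n(n+m))$ time instead of linear.
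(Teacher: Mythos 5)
Your proposal is correct and follows exactly the paper's route: the corollary is obtained by combining Lemma~\ref{lemma:interval01} with Theorem~\ref{the:canonical}, which is all the paper says. Your additional unwinding of the algorithm (the case split on $k$ via Lemmas~\ref{lem:plusone} and~\ref{lem:pluszero}, and the $O(n+m)$ deletability test via domination counts) correctly makes explicit what the paper leaves implicit in the proof of Theorem~\ref{the:canonical}, including the key observation that the canonical set itself never needs to be constructed.
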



\section{Concluding remarks} \label{dsr:conclusion}

In this paper, we delineated the complexity of the
{\sc dominating set reconfiguration} problem restricted to various graph classes.
As shown in \figurename~\ref{fig:results}, our results clarify some
interesting boundaries on the graph classes lying between tractability and PSPACE-completeness:
For example, the structure of interval graphs can be seen as a path-like structure of cliques.
As a super-class of interval graphs, the well-known class of chordal graphs has a tree-like structure of cliques.
We have proved that {\sc dominating set reconfiguration} is solvable in
linear time on interval graphs, while it is PSPACE-complete on chordal graphs.

We note again that our linear-time algorithms for cographs, trees, and interval
graphs employ the same strategy. We also emphasize that this general
scheme can be applied to any graph which admits a canonical dominating set.
It is easy to modify our algorithms so that they actually
find a reconfiguration sequence for a $\YES$-instance $(G, D_s, D_t, k)$ on cographs, trees, or interval graphs.
Observe that each vertex is touched at most once in the
reconfiguration sequence from $D_s$ (or $D_t$) to the canonical dominating set.
Therefore, for a $\YES$-instance on an $n$-vertex graph belonging to one of those classes, there exists a reconfiguration
sequence between $D_s$ and $D_t$ which touches vertices only $O(n)$ times.
In other words, the length of a shortest reconfiguration sequence between $D_s$ and $D_t$ can be bounded by $O(n)$.

\subsubsection*{Acknowledgments.}
This work is partially supported by the Natural Science and
Engineering Research Council of Canada (A.~Mouawad, N.~Nishimura and Y.~Tebbal) and
by MEXT/JSPS KAKENHI 25106504 and 25330003 (T.~Ito), 25104521 and 26540005 (H.~Ono), and 26730001 (A.~Suzuki).

\bibliographystyle{abbrv}

\newpage
\appendix
\section*{Appendix}
\renewcommand{\thesubsection}{\Alph{subsection}}

\subsection{Details omitted from Section~\ref{sec:hardness}}
\subsubsection{Proof of Theorem~\ref{the:split}}
\begin{proof}
We again give a polynomial-time reduction from {\sc vertex cover reconfiguration}.
We extend the idea developed for the NP-hardness proof of {\sc dominating set} on split graphs~\cite{Ber84}.
	
Let $(G^\prime, C_s, C_t, k)$ be an instance of {\sc vertex cover reconfiguration}, where
$V(G^\prime) = \{v_1, v_2, \ldots, v_n\}$ and $E(G^\prime) = \{e_1, e_2, \ldots, e_m\}$.
We construct the corresponding split graph $G$, as follows.
(See also \figurename~\ref{fig:reduction}(a) and (b).)
Let $V(G) = A \cup B$, where $A = V(G^\prime)$ and $B = \{ w_1, w_2, \ldots, w_m\}$;
each vertex $w_i \in B$ corresponds to the edge $e_i$ in $E(G^\prime)$.
We join all pairs of vertices in $A$ so that $A$ forms a clique in $G$.
In addition, for each edge $e_i = v_p v_q$ in $E(G^\prime)$, we join $w_i \in B$ with each of $v_p$ and $v_q$ in $G$.
Let $G$ be the resulting graph, and let $(G, D_s = C_s, D_t = C_t, k)$ be the
corresponding instance of {\sc dominating set reconfiguration}.
Clearly, this instance can be constructed in polynomial time.
Thus, we will prove that $D_s \sevstep{k} D_t$ holds if and only if there is a
reconfiguration sequence of vertex covers in $G^\prime$ between $C_s$ and $C_t$.

We first prove the if direction.
Because both problems employ the same reconfiguration rule, it suffices to prove
that any vertex cover $C$ of $G^\prime$ forms a dominating set of $G$.
Since $C \subseteq V(G^\prime) = A$ and $A$ is a clique, all vertices in $A$ are dominated by the vertices in $C$.
Thus, consider a vertex $w_i$ in $B$, which corresponds to the edge $e_i = v_p v_q$ in $E(G^\prime)$.
Then, since $C$ is a vertex cover of $G^\prime$, at least one of $v_p$ and $v_q$ must be contained in $C$.
This means that $w_i$ is dominated by the endpoint $v_p$ or $v_q$ in $G$.
Therefore, $C$ is a dominating set of $G$.
	
\begin{figure}
    \centering
	\includegraphics[width=0.9\linewidth]{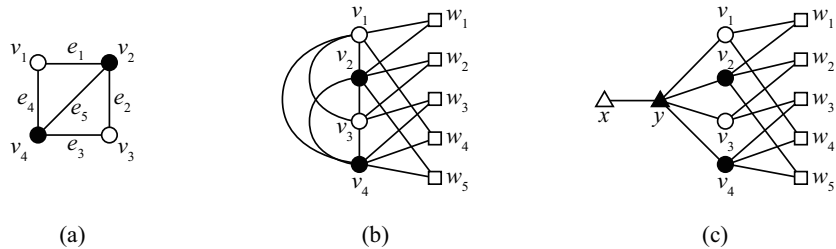}
	\vspace{-1em}
	\caption{(a) Vertex cover $\{v_2, v_4\}$ of a graph, (b) dominating set
    $\{v_2, v_4\}$ of the corresponding split graph, and (c) dominating set $\{v_2, v_4, y\}$ of the corresponding bipartite graph.}
	\vspace{-1em}
	\label{fig:reduction}
\end{figure}

We now prove the only-if direction.
Notice that, for each vertex $w_i \in B$ corresponding to
the edge $e_i = v_p v_q$ in $E(G^\prime)$, we have $N_G[w_i] \subseteq N_G[v_p]$ and $N_G[w_i] \subseteq N_G[v_q]$.
Therefore, if $D_s \sevstep{k} D_t$ holds, then we can obtain a reconfiguration
sequence of dominating sets in $G$ between $D_s$ and $D_t$ which touches vertices only in $A = V(G^\prime)$;
recall the arguments in the proof of Theorem~\ref{the:hardness1}.
Observe that any dominating set $D$ of $G$ such that $D \subseteq A = V(G^\prime)$
forms a vertex cover of $G^\prime$, because each vertex $w_i \in B$ is dominated by at least one vertex in $C \subseteq V(G^\prime)$.
We have thus verified the only-if direction.
\qed
\end{proof}

\subsubsection{Proof of Theorem~\ref{the:bipartite}}
\begin{proof}
We give a polynomial-time reduction from {\sc dominating set reconfiguration}
on split graphs to the same problem restricted to bipartite graphs.
The same idea is used in the NP-hardness proof of {\sc dominating set} for bipartite graphs~\cite{Ber84}.
	
Let $(G^\prime, D_s^\prime, D_t^\prime, k^\prime)$ be an instance
of {\sc dominating set reconfiguration}, where $G^\prime$ is a split graph.
Then, $V(G^\prime)$ can be partitioned into two subsets $A$ and $B$ which
form a clique and an independent set in $G^\prime$, respectively.
Furthermore, by the reduction given in the proof of Theorem~\ref{the:split}, the problem
for split graphs remains PSPACE-complete even if both $D_s^\prime \subseteq A$ and $D_t^\prime \subseteq A$ hold. 	
	
We now construct the corresponding bipartite graph $G$, as follows.
(See also \figurename~\ref{fig:reduction}(b) and (c).)
First, we delete any edge joining two vertices in $A$, and make $A$ an independent set.
Then, we add a new edge consisting of two new vertices $x$ and $y$ and join $y$ with each vertex in $A$.
The resulting graph $G$ is bipartite.
Let $D_s = D_s^\prime \cup \{y\}$, $D_t = D_t^\prime \cup \{y\}$, $k = k^\prime+1$, and we
obtain the corresponding {\sc dominating set reconfiguration} instance $(G, D_s, D_t, k)$, where $G$ is bipartite.
Clearly, this instance can be constructed in polynomial time.
Thus, we will prove that $D_s \sevstep{k} D_t$ holds if and only if $D_s^\prime \sevstep{k^\prime} D_t^\prime$ holds.

We first prove the if direction.
Suppose that $D_s^\prime \sevstep{k^\prime} D_t^\prime$ holds. Hence, there exists
a reconfiguration sequence in $G^\prime$ between $D_s^\prime$ and $D_t^\prime$.
Consider any dominating set $D^\prime$ of $G^\prime$ in this sequence.
Then, $B \subset N_{G}[D^\prime]$ holds because $B \subset N_{G^\prime}[D^\prime]$ and we
have deleted only the edges such that both endpoints are in $A$.
Since $N_G(y) = A \cup \{x\}$, we can conclude that $D^\prime \cup \{y\}$ is a dominating set of $G$.
Furthermore, $|D^\prime \cup \{y\}| \le k^\prime + 1 = k$.
Thus, $D_s \sevstep{k} D_t$ holds.

We then prove the only-if direction.
Suppose that $D_s \sevstep{k} D_t$ holds, and hence there exists a
reconfiguration sequence in $G$ between $D_s = D_s^\prime \cup \{y\}$ and $D_t = D_t^\prime \cup \{y\}$.
Notice that any dominating set of $G$ contains at least one of $x$ and $y$.
Since $N_G[x] \subset N_G[y]$ and $y \in D_s, D_t$, we can assume that $y$ is
contained in all dominating sets in the reconfiguration sequence.
Recall that both $D_s^\prime \subseteq A$ and $D_t^\prime \subseteq A$ hold.
Thus, if a vertex $w_i \in B$ is touched, then it must be added first.
Since $N_G(y) = A \cup \{x\}$, we have $N_G[\{w_i, y\}] = N_G[\{v_p, y\}] = N_G[\{v_q, y\}]$, where $N_G(w_i) = \{ v_p, v_q\}$.
Therefore, we can replace the addition of $w_i$ by that of either $v_p$ or $v_q$ and
obtain a reconfiguration sequence in $G$ between $D_s$ and $D_t$ which touches vertices only in $A$.
Consider any dominating set $D$ of $G$ in such a reconfiguration sequence.
Since $y \in D$, we have $|D \cap V(G^\prime)| \le k - 1 = k^\prime$.
Furthermore, since $D \cap V(G^\prime) \subseteq A$ and $A$ forms a clique
in $G^\prime$, we have $A \subseteq N_{G^\prime}[D \cap V(G^\prime)]$.
Since there is no edge joining $y$ and a vertex in $B$, each vertex in
$B$ is dominated by some vertex in $D \cap V(G^\prime)$.
Therefore, $D \cap V(G^\prime)$ is a dominating set of $G^\prime$ of
cardinality at most $k^\prime$ and $D_s^\prime \sevstep{k^\prime} D_t^\prime$ holds.
\qed
\end{proof}

\subsection{Details omitted from Section~\ref{sec:algo}}
\subsubsection{Proof of Lemma~\ref{lem:conncted}}
\begin{proof}
Let $\anyds$ be any dominating set of $G$.
For each $i \in \{1,2,\ldots,\cocompnum\}$, since $\cocanods_i$ is
canonical for $\cocompg{i}$, we have $\anyds \cap V(\cocompg{i}) \sevstep{k_i} \cocanods_i$ for $k_i = |\anyds \cap V(\cocompg{i})|+1$.
 Therefore, we can independently transform $\anyds \cap V(\cocompg{i})$ into $\cocanods_i$ for each $i \in \{1,2,\ldots,\cocompnum\}$.
Clearly, this is a reconfiguration sequence from $\anyds$ to $\cocanods = \cocanods_1 \cup \cocanods_2 \cup \cdots \cup \cocanods_{\cocompnum}$.
Furthermore, since $\cocanods_i$ is a minimum dominating set of $\cocompg{i}$, we have
$|\anyds \cap V(\cocompg{i})| \ge |\cocanods_i|$ for each $i \in \{1,2,\ldots,\cocompnum\}$.
Thus, any dominating set appearing in the sequence is of cardinality at most $|\anyds|+1$.
\qed
\end{proof}

\subsection{Details omitted from Section~\ref{dsr:cograph}}
\subsubsection{Proof of Lemma~\ref{lem:cograph03}}
\begin{proof}
We construct a reconfiguration sequence from $\anyds$ to $\cocanods$ such that
each intermediate dominating set is of cardinality at most $|\anyds|+1$.
\medskip

\noindent
{\bf Case (i):} $|\cocanods| = 1$.
	
In this case, $\cocanods$ consists of a universal vertex $w$, that is, $N[w] = V(G)$.
Therefore, we first add $w$ to $\anyds$ if $w \not\in \anyds$, and then delete the vertices in $\anyds \setminus \{w\}$ one by one.
Since $N[w] = V(G)$, all intermediate vertex subsets are dominating sets of $G$.
Since the addition is applied only to $w$, we have $\anyds \sevstep{k} \cocanods$ for $k=|\anyds|+1$.
\medskip

\noindent
{\bf Case (ii):} $|\cocanods| = 2$.

In this case, $\cocanods$ consists of two vertices $\cographa \in V(\coga)$ and $\cographb \in V(\cogb)$.
Since $\cocanods$ is a minimum dominating set of $G$, we have $|\anyds| \ge 2$.
Note that, however, $\anyds \subseteq V(\coga)$ or $\anyds \subseteq V(\cogb)$ may hold.
We assume without loss of generality that $|\anyds \cap V(\coga)| \ge |\anyds \cap V(\cogb)|$.
Then, we construct a sequence of vertex subsets of $G$, as follows:
\begin{listing}{aaa}
	\item[(1)] Add $\cographb$ to $\anyds$ if $\cographb \not\in \anyds$; let $\anyds_1 = \anyds \cup \{\cographb\}$.
	\item[(2)] If $|\anyds_1 \cap V(\coga)| = |\anyds \cap V(\coga)| \ge 2$, then delete
    one vertex in $\anyds \cap (V(\coga) \setminus \{\cographa\})$; otherwise delete a vertex
    in $\anyds_1 \cap (V(\cogb) \setminus \{\cographb\}) = \anyds \cap (V(\cogb) \setminus \{\cographb\})$ if it exists.
	Let $\anyds_2$ be the resulting vertex subset of $G$.
	\item[(3)] Add $\cographa$ to $\anyds_2$ if $\cographa \not\in \anyds_2$; let $\anyds_3 = \anyds_2 \cup \{ \cographa\}$.
	\item[(4)] Delete from $\anyds_3$ all vertices in $\anyds \setminus \{\cographa, \cographb\}$ one by one.
\end{listing}
We will prove that each vertex subset appearing above is a dominating set of $G$ with cardinality at most $|\anyds|+1$.
Indeed, it suffices to show that $\anyds_2$ is a dominating set of $G$ such that $|\anyds_2| \le |\anyds|$;
note that $\anyds_3$ contains both $\cographa \in V(\coga)$ and $\cographb \in V(\cogb)$ and hence
any vertex subset appearing in Steps~(3) and (4) above is a dominating set of $G$ with cardinality at most $|\anyds_2|+1$.

We first consider the case where $|\anyds_1 \cap V(\coga)| \ge 2$.
In this case, $\anyds_1 \cap (V(\coga) \setminus \{\cographa\}) \neq \emptyset$, and hence we
can delete one vertex $u$ $(\neq \cographa)$ from $\anyds_1$.
We thus have $|\anyds_2| = |\anyds_1| - 1 \le |\anyds|$, as required.
Since $|\anyds_1 \cap V(\coga)| \ge 2$, $\anyds_2$ $(= \anyds_1 \setminus \{u\})$ contains at least one vertex in $V(\coga)$.
Furthermore, $\cographb \in \anyds_2$ and hence $\anyds_2$ is a dominating set of $G$.
	
We then consider the case where $|\anyds_1 \cap V(\coga)| \le 1$.
Note that, since $|\anyds| \ge 2$ and $|\anyds_1 \cap V(\coga)| = |\anyds \cap V(\coga)| \ge |\anyds \cap V(\cogb)|$, we
have $|\anyds \cap V(\coga)| = |\anyds \cap V(\cogb)| = 1$ in this case.
Let $\anyds \cap V(\cogb) = \{ z\}$.
If $\cographb \not\in \anyds$ (and hence $z \neq \cographb$) then $|\anyds_1| = |\anyds|+1$
and $\anyds_1 \cap (V(\cogb) \setminus \{\cographb\}) = \{z\}$.
Therefore, $\anyds_2 = \anyds_1 \setminus \{z\}$ and $|\anyds_2| = |\anyds_1| - 1 = |\anyds|$.
Furthermore, since $\cographb \in \anyds_2$ and $|\anyds_2 \cap V(\coga)| = |\anyds_1 \cap V(\coga)| = 1$, $\anyds_2$ is a dominating set of $G$.
On the other hand, if $\cographb \in \anyds$, then we have $\anyds \cap V(\cogb) = \{ \cographb \}$.
Consequently, $\anyds_2 = \anyds_1 = \anyds$ and hence $\anyds_2$ is a dominating set of $G$ of cardinality $|\anyds_2| = |\anyds|$.
\qed
\end{proof}

\subsection{Details omitted from Section~\ref{dsr:altree}}
\subsubsection{Proof of Lemma~\ref{lem:treemin}}
\begin{proof}
Suppose for a contradiction that $\anyds \cap C_i = \emptyset$ holds for some index $i \in \{1, 2, \ldots, |\partb| \}$.
We will prove that $C_i$ contains at least one vertex $u$ such that $N[u] \subseteq C_i$.
Then, since $\anyds \cap C_i = \emptyset$, the vertex $u$ is not dominated by any vertex in $D$;
this contradicts the assumption that $\anyds$ is a dominating set of $T$.
Recall that all leaves in $T$ are labeled $1$, and hence $\vercano{i}$ is an internal vertex.
	
First, consider the case where $\vercano{i}$ has a child $u$ which is a leaf of $T$.
Then, $N[u] \subseteq C_i$ holds for the leaf $u$; a contradiction.

Second, consider the case where $i = |\partb|$, that is, $C_i$ $\bigl(=C_{|\partb|} \bigr)$ contains the root $r$ of $T$.
Recall that $r$ is of degree one and is labeled either $2$ or $3$;
we will prove that $N[r] \subseteq C_{|\partb|}$ holds.
If $r$ is labeled $3$, then its (unique) child $v$ is labeled $2$ and hence $v = \vercano{|\partb|}$.
Therefore, $C_{|\partb|}$ contains both $r$ and $v$ and hence $N[r] \subseteq C_{|\partb|}$ holds; a contradiction.
On the other hand, if $r$ is labeled $2$ and hence $r = \vercano{|\partb|}$, then its child $v$ is labeled either $1$ or $3$.
Therefore, $C_{|\partb|}$ contains both $r$ and $v$, and hence $N[r] \subseteq C_{|\partb|}$ holds; a contradiction.

Finally, consider the case where $i \neq |\partb|$ and $\vercano{i}$ is an
internal vertex such that all children of $\vercano{i}$ are also internal vertices in $T$.
Since $\vercano{i}$ is labeled $2$, there exists at least one child $u$ of $\vercano{i}$ which is labeled $1$.
Then, since $u$ is an internal vertex, all children of $u$ (and hence all ``grandchildren'' of $\vercano{i}$) are labeled $3$.
Therefore, $N[u] \subseteq C_i$ holds for the child $u$ of $\vercano{i}$; a contradiction.
\qed
\end{proof}

\subsubsection{Proof of Lemma~\ref{lem:treereach}}
\begin{proof}
We construct a reconfiguration sequence from $\anyds$ to $\partb$
such that each intermediate dominating set is of cardinality at most $|\anyds|+1$.
	
Let $\anyds_0 = \anyds$.
For each $i$ from $1$ to $|\partb|$, we focus on the vertices in $C_i$
and transform $\anyds_{i-1} \cap C_i$ into $\partb \cap C_i$ as follows:

\begin{listing}{aaa}
	\item[(1)] add the vertex $\vercano{i} \in \partb \cap C_i$ to $\anyds_{i-1}$ if $\vercano{i} \notin \anyds_{i-1}$;
	\item[(2)] delete the vertices in $\anyds_{i-1} \cap \bigl(C_i \setminus \{\vercano{i}\} \bigr)$ one by one; and
	\item[(3)] let $\anyds_i$ be the resulting vertex set.
\end{listing}
\smallskip
	
We first claim that $\anyds_i$ forms a dominating set of $T$ for each $i \in \{1, 2, \ldots, |\partb| \}$.
Notice that $\anyds_i \cap V(T_i) = \partb \cap V(T_i)$ for the resulting
vertex set $\anyds_i$. Moreover, only the root $\vercano{i}$ of $T_i$ is adjacent to a vertex in $V(T) \setminus V(T_i)$.
Since $\vercano{i} \in \partb$ and both $\partb$ and $\anyds_{i-1}$ form
dominating sets of $T$, we can conclude that $\anyds_i$ forms a dominating set of $T$.
Then, all vertex subsets appearing in Steps~(1) and (2) above also form
dominating sets of $T$, because each of them is a superset of $\anyds_i$.

We then claim that $|\anyds_{i-1}| \ge |\anyds_i|$ for each $i \in \{1, 2, \ldots, |\partb| \}$.
If $\vercano{i} \in \anyds_{i-1}$, then the claim clearly holds because
we only delete vertices in Step~(2) without adding the vertex $\vercano{i}$ in Step~(1).
We thus consider the case where $\vercano{i} \not\in \anyds_{i-1}$.
Since $\anyds_{i-1}$ is a dominating set of $T$, Lemma~\ref{lem:treemin} implies
that $\anyds_{i-1} \cap \bigl(C_i \setminus \{\vercano{i}\} \bigr) \neq \emptyset$ in this case.
Therefore, we have $|\anyds_{i-1}| \ge |\anyds_i|$.

Note that, since addition is executed only in Step~(1), the maximum cardinality of any dominating set
in the reconfiguration sequence from $\anyds_{i-1}$ to $\anyds_i$ is at most $|\anyds_{i-1}| + 1$.
Since $|\anyds_{i-1}| \ge |\anyds_i|$ for each $i \in \{1, 2, \ldots, |\partb| \}$, the maximum cardinality
of any dominating set in the reconfiguration sequence
from $\anyds_{0}$ $(=\anyds)$ to $\anyds_{|\partb|}$ $(=\partb)$ is at most $|\anyds| + 1$.
Therefore, there exists a reconfiguration sequence from $\anyds$ to $\partb$ such that all
intermediate dominating sets are of cardinality at most $|\anyds|+1$.
\qed
\end{proof}

\subsection{Details omitted from Section~\ref{dsr:alinterval}}
\subsubsection{Proof of Lemma~\ref{lemma:interval03}}
\begin{proof}
Suppose for a contradiction that $\anyds \cap C_i = \emptyset$ holds for some index $i \in \{1, 2, \ldots, |\partb| \}$.
Assume that the vertices in $\parta$ are ordered as $u_1, u_2, \ldots, u_{|\parta|}$
such that $\intervalb{u_1} < \intervalb{u_2} < \cdots < \intervalb{u_{|\parta|}}$.
Then, observe that $\parta \cap C_i = \{ u_i \}$ holds for every $i \in \{1, 2, \ldots, |\parta| \}$.
In addition, $\parta \cap C_{|\partb|} = \emptyset$ holds if $|\partb|=|\parta|+1$.

First, we consider the case where both $i = |\partb|$ and $|\partb| = |\parta|+1$ hold;
in this case, both $\parta \cap C_{|\partb|} = \emptyset$ and $\partb \cap C_{|\partb|} = \{ w_{|\partb|} \}$ hold.
Since $\anyds \cap C_{|\partb|} = \emptyset$, $w_{|\partb|} \in \partb$ must be
dominated by some vertex $v$ in $C_{-} = C_1 \cup C_2 \cup \cdots \cup C_{|\partb|-1}$.
Then, $vw_{|\partb|} \in E(G)$ and hence we have $ \intervala{w_{|\partb|}} \le \intervalb{v}$.
Since $v \in C_-$, by Eq.~(\ref{eq:interval:minimum00}) we have $\intervalb{v} \le \intervalb{w_{|\partb|-1}}$
and hence $\intervala{w_{|\partb|}} \le \intervalb{w_{|\partb|-1}} < \intervalb{w_{|\partb|}}$.
Therefore, $w_{|\partb|} \in N(w_{|\partb|-1})$ holds and $w_{|\partb|}$ must be labeled $3$.
This contradicts the assumption that $w_{|\partb|}$ is labeled $2$.

We now consider the other case, that is, both $\parta \cap C_i = \{ u_i \}$
and $\partb \cap C_i = \{ w_i \}$ hold for index $i$.
Since $\anyds \cap C_i = \emptyset$, $u_i \in \parta$ must be dominated
by at least one vertex in $C_- = C_1 \cup C_2 \cup \cdots \cup C_{i-1}$ or $C_+ = C_{i+1} \cup C_{i+2} \cup \cdots \cup C_{|\partb|}$.
If $u_i$ is dominated by some vertex in $C_-$, then the same arguments given
above yield a contradiction, i.e. $u_i$ must be labeled $3$ even though $u_i$ is in $\parta$.
Therefore, $u_i$ must be dominated by some vertex $v$ in $C_+$.
Then, since $vu_i \in E(G)$, we have $v \in N(u_i) \subset N[u_i]$.
Furthermore, since $v \in C_+$, by Eq.~(\ref{eq:interval:minimum00}) we have $ \intervalb{w_i} < \intervalb{v}$.
However, recall that $w_i \in \partb$ is chosen as the vertex in $N[u_i]$ which has the maximum $r$-value among all vertices in $N[u_i]$.
This contradicts the assumption that $w_i$ is labeled $2$.
\qed
\end{proof}

\subsubsection{Proof of Lemma~\ref{lemma:interval04}}
\begin{proof}
We construct a reconfiguration sequence from $\anyds$ to $\partb$ such that
each intermediate dominating set is of cardinality at most $|\anyds|+1$.

Let $\anyds_0 = \anyds$.
For each $i$ from $1$ to $|\partb|$, we focus on the vertices in $C_i$, and transform $\anyds_{i-1} \cap C_i$ into $\partb \cap C_i$ as follows:
\begin{listing}{aaa}
	\item[(1)] add the vertex $w_i \in \partb \cap C_i$ to $\anyds_{i-1}$ if $w_i \notin \anyds_{i-1}$;
	\item[(2)] delete the vertices in $\anyds_{i-1} \cap (C_i \setminus \{ w_i \})$ one by one; and
	\item[(3)] let $\anyds_i$ be the resulting vertex set.
\end{listing}
\smallskip
	
For each $i \in \{1, 2, \ldots, |\partb| \}$, let $C_{-i} = C_1 \cup C_2 \cup \cdots \cup C_i$
and $C_+ = C_{i+1} \cup C_{i+2} \cup \cdots \cup C_{|\partb|}$.
We claim that $\anyds_i$ forms a dominating set of $G$:
\begin{listing}{a}
	\item[-] Consider a vertex $v$ such that $\intervalb{v} \le \intervalb{w_i}$.
				Since $\anyds_i \cap C_{-i} = \partb \cap C_{-i}$ holds, $v$ is dominated by some vertex in $\partb \cap C_{-i}$.
	\item[-] Consider a vertex $v$ such that $\intervalb{w_i} \le \intervala{v}$.
				Since $\anyds_i \cap C_+ = \anyds \cap C_+$ holds, $v$ is dominated by some vertex in $\anyds \cap C_+$.
	\item[-] Finally, consider a vertex $v$ such that $\intervala{v} < \intervalb{w_i} < \intervalb{v}$.
				Then, $vw_i \in E(G)$ and hence $v$ is dominated by $w_i \in \anyds_i$.
\end{listing}
Thus, $\anyds_i$ forms a dominating set of $G$.
Since each vertex subset appearing in Steps~(1) and (2) above is a superset of $\anyds_i$, it also forms a dominating set of $G$.

By the same arguments as in the proof of Lemma~\ref{lem:treereach}, we can conclude
that the reconfiguration sequence from $\anyds$ to $\partb$ above consists
only of dominating sets of cardinality at most $|\anyds|+1$.
\qed
\end{proof}

\end{document}